\newtheorem{lem}{Lemma}
\newtheorem{thm}{Theorem}
\newtheorem{ex}{Example}
\newtheorem{cor}{Corollary}
\def\ve{\varepsilon}
\def\mt{\mathbf{x}}
\def\mb{\mathbf}
\def\mbb{\mathbb}
\def\mc{\mathcal}
\def\wh{\widehat}
\def\wt{\widetilde}
\def\ra{\rightarrow}
\def\ol{\overline}
\def\ul{\underline}
\def\rank{\mbox{\upshape{rank}}}
\title{Consensus in the presence of interference}
\author{Usman~A.~Khan,~\IEEEmembership{Member, IEEE}, Shuchin~Aeron,~\IEEEmembership{Member, IEEE}
\thanks{The authors are with Department of Electrical and Computer Engineering at Tufts University, Medford, Email: \texttt{\{khan,shuchin\}@ece.tufts.edu}.
}}
\begin{document}
\maketitle

\begin{abstract}
This paper studies distributed strategies for average-consensus of arbitrary vectors in the presence of network interference. We assume that the underlying communication on any \emph{link} suffers from \emph{additive interference} caused due to the communication by other agents following their own consensus protocol. Additionally, no agent knows how many or which agents are interfering with its communication. Clearly, the standard consensus protocol does not remain applicable in such scenarios. In this paper, we cast an algebraic structure over the interference and show that the standard protocol can be modified such that the average is reachable in a subspace whose dimension is complimentary to the maximal dimension of the interference subspaces (over all of the communication links). To develop the results, we use \emph{information alignment} to align the intended transmission (over each link) to the null-space of the interference (on that link). We show that this alignment is indeed invertible, i.e. the intended transmission can be recovered over which, subsequently, consensus protocol is implemented. That \emph{local} protocols exist even when the collection of the interference subspaces span the entire vector space is somewhat surprising. 
\end{abstract}

\section{Introduction}

In this paper, we consider the design and analysis of average-consensus protocols (averaging vectors in $\mbb{R}^n$) in the presence of network interference. Each agent, while communicating locally with its neighbors for consensus, causes an interference in other communication links. We assume that these interferences are additive and lie on low-dimensional subspaces. Such interference models have been widely used in several applications, e.g. electromagnetic brain imaging~\cite{Anonymous:HlXifkm4}, magnetoencephalography~\cite{Gutierrez:2004ca,Sekihara:2004bn}, beamforming~\cite{McCloud:1998im,Dogandzic:2002jm}, and multiple-access channels~\cite{Lupas:1989bz,Varanasi:1998dx}. Interference cancellation, thus, has been an important subject of study in the aforementioned areas towards designing matched detectors, adaptive beamformers, and generalized hypothesis testing~\cite{Scharf:1994jv,McCloud:1997je,Goldstein:1997jb,Wang:1998cm,Dogandzic:2007dj,Monsees:2013if}. 

As distributed architectures are getting traction, information is to be distributedly processed for the purposes of learning, inference, and actuation. Average-consensus, thus, is a fundamental notion in distributed decision-making, see~\cite{jadbabailinmorse03,Xiao05distributedaverage,Mesbahi-parameter,Giannakis-est,usman_hdctsp09,olfati:cdc09,Sayed-LMS,KarSIAM2013} among others. When the inter-agent communication is noiseless and interference-free, the standard protocol is developed in~\cite{boyd:04}. Subsequently, a number of papers~\cite{Bucklew,Kashyap,Tuncer} consider average-consensus in imperfect scenarios. Reference~\cite{Kar_TSP2009} considers consensus with link failures and channel noise, while~\cite{Chen_ECDC2011} addresses asymmetric links with asymmetry in packet losses. Consensus under stochastic disturbances is considered in~\cite{Aysal_TIT2010}, while~\cite{Nazer_JSTSP2011} studies a natural superposition property of the communication medium and uses computation codes to achieve energy efficient consensus. 


In contrast to the past work outlined above, we focus on an algebraic model for network interference. We assume that the underlying communication on any link suffers from additive interference caused due to the communication by other agents following their own consensus protocol. The corresponding interference subspace, in general, depends on the communication link and the interfering agent. A fortiori, it is clear that if the interference by an agent is persistent in all dimensions~($\mbb{R}^n$), there is no way to recover the true average unless schemes similar to interference alignment~\cite{Jafar11} are used. In these alignment schemes, the data is projected onto higher dimensions such that the interferences and the data lie in different low-dimensional subspaces; clearly, requiring an increase in the communication resources. 

On the other hand, if the interference from each agent already lies in (possibly different) low-dimensional subspaces, the problem we address is whether one can exploit this low-dimensionality for interference cancellation, and subsequently, for consensus. Furthermore, we address how much information can be recovered when the collection of the local interferences span the entire vector space, $\mbb{R}^n$? Our contribution in this context is to develop information alignment strategies for interference cancellation and derive a class of (vector) consensus protocols that lead to a meaningful consensus. In particular, we show that the prospoed alignment achieves the average in a subspace whose dimension is complimentary to the maximal dimension of the interference subspaces (over all of the communication links).

To be specific, if agent~$j$ sends~$\mb{x}^j\in\mbb{R}^n$ to agent~$i$, agent~$i$ actually receives\footnote{In general, the interference matrix, $\Gamma$, may depend on the particular link,~$j\ra i$, and the interfering agent,~$m$, and will be denoted by~$\Gamma_{ij}^m$.}~$\mb{x}^j + \sum_m\Gamma\mb{x}^m$, with~$\ol{\gamma}\triangleq\rank(\Gamma)<n$. 
In this context, we address the following challenges: 
\begin{inparaenum}[(i)]
\item The received signal is corrupted by several interferers, each on a distinct (low-rank) subspace. Is it possible to design a \emph{local}  operation that cancels each interference? 
\item The aforementioned cancellation has to be \emph{locally} reversible (to be elaborated later) in order to build a meaningful consensus. 
\item The signal hampered with interference passes through consensus weights,~$w_{ij}$, iteratively. Notice again the received signal,~$\sum_{j\in\mc{N}_i}w_{ij}(\mb{x}^j + \sum_m\Gamma\mb{x}^m)$, at agent~$i$, where~$\mc{N}_i$ is the neighbors at agent~$i$. An arbitrary small disturbance due to the interference can result in perturbing the spectral radius of the consensus weight matrix to~$1+\ve$, which forces the iterations to converge to~$0$ when~$\ve<0$, or diverge when~$\ve>0$,~\cite{usman_hdctsp09}.
\end{inparaenum}

We explicitly assume that no agent in the network knows how many and which agents may be interfering with its received signals. Additionally, we assume that only the null space of the underlying interferences are known \emph{locally} (singular values and basis vectors may not be known). Within these assumptions, it is clear that the aforementioned challenges are non-trivial. What we describe in this paper are completely local \emph{information alignment} strategies that not only ensure that average-consensus is reached, but also characterize where this consensus is reached. In particular, we show that average of the initial conditions, vectors in~$\mbb{R}^n$, can be recovered in the subspace whose dimension, $n-\ol{\gamma}$, is complimentary to the (maximal) dimension, $\ol{\gamma}$, of the local interferences.

The rest of the paper is organized as follows. Section \ref{pre_not} outlines the notation and gathers some useful facts from linear algebra. Section \ref{pf} formulates the problem while Section \ref{aca} presents a simple architecture, termed as \emph{uniform interference}, and develops the information alignment scheme. Section \ref{aca} then identifies two generalizations of the uniform interference, namely \emph{uniform outgoing interference} and \emph{uniform incoming interference}, subsequently treated in Sections \ref{s_uoi} and \ref{s_uii}, respectively. In each of these sections, we provide simulations to illustrate the main theoretical results and their implications. Section \ref{s_discuss} provides a summary and discussion of the main results and Section~\ref{s_conclude} concludes the paper.

\section{Notation and Preliminaries}\label{pre_not}
We use lowercase bold letters to denote vectors and uppercase italics for matrices (unless clear from the context). The symbols~$\mb{1}_n$ and~$\mb{0}_n$ are the~$n$-dimensional column vectors of all~$1$'s and all~$0$'s, respectively. The identity and zero matrices of size~$n$ are denoted by~$I_n$ and~$\mb{0}_{n\times n}$, respectively. We assume a network of~$N$ agents indexed by,~$i=1,\ldots,N$, connected via an undirected graph,~$\mc{G}=(\mc{V},\mc{E})$, where~$\mc{V}$ is the set of agents, and~$\mc{E}$, is the set of links,~$(i,j)$, such that agent~$j\in\mc{V}$ can send information to agent~$i\in\mc{V}$, i.e.~$j\ra i$. Over this graph, we denote the neighbors of agent~$i$ as~$\mc{N}_i$, i.e. the set of all agents that can send information to agent~$i$: $\mc{N}_i = \{j~|~(i,j)\in\mc{E}\}.$

In the entire paper, the initial condition at an agent,~$i\in\mc{V}$, is denoted by an~$n$-dimensional vector,~$\mt_0^i\in\mbb{R}^n$. For any arbitrary vector,~$\mt_0^i\in\mbb{R}^n$, we use~$\oplus \mt_0^i$ to denote the subspace spanned by~$\mt_0^i$, i.e. the collection of all~$\alpha\mt_0^i$, with~$\alpha\in\mbb{R}$. Similarly, for a matrix,~$A\in\mbb{R}^{n\times n}$, we use~$\oplus A$ to denote the (range space) subspace spanned by the columns of~$A$:
\begin{eqnarray*}
\oplus A = \left\{\sum_{i=1}^n\alpha_i \mb{a}_i~|~\alpha_i\in\mbb{R}\right\},\qquad
A = \left[
\begin{array}{ccc}
\mb{a}_1 & \ldots & \mb{a}_n
\end{array}
\right].
\end{eqnarray*}
For a collection of matrices,~$A_j\in\mbb{R}^{n\times n}$,~$j=1,\ldots,N$, we use~$\oplus_jA_j$ to denote the subspace spanned by all of the columns in all of the~$A_j$'s: let $A_j = \left[
\begin{array}{ccc}
\mb{a}_{j1} & \ldots & \mb{a}_{jn}
\end{array}
\right]$, then
\begin{eqnarray*}
\oplus_{j} A_j = \left\{\sum_{j=1}^N\beta_j\sum_{i=1}^n\alpha_i \mb{a}_{ji}~|~\alpha_i,\beta_j\in\mbb{R}\right\}.
\end{eqnarray*}

Let~$\rank(A)=\ul{\gamma},$ for some non-negative integer,~$\ul{\gamma}\leq n$, then~$\dim(\oplus A)=\rank(A)=\ul{\gamma}$. The pseudo-inverse of~$A$ is denoted by~$A^\dagger\in\mbb{R}^{n\times n}$; the orthogonal projection,~$\wt{\mt}_0^i$, of an arbitrary vector,~$\mt_0^i\in\mbb{R}^n$, on the range space,~$\oplus A$, is given by the matrix~$I_{A}=AA^\dagger$, i.e.
\begin{eqnarray}
\wt{\mt}_0^i = I_A\mt_0^i = AA^\dagger\mt_0^i.
\end{eqnarray}
With this notation,~$\wt{\mt}_0^i\in{\oplus A}\subseteq\mbb{R}^n$. Clearly,~$I_A ^2 = AA^\dagger AA^\dagger = AA^\dagger = I_A$ is a projection matrix from the properties of pseudo-inverse:~$AA^\dagger A = A$ and~$A^\dagger A A^\dagger = A^\dagger$. Note that when~$\mt_0^i\in{\oplus A}$, then~$I_A \mt_0^i=\mt_0^i.$

The Singular Value Decomposition (SVD) of~$A$ is given by~$A=U_AS_AV^\top_A$ with~$U_AU_A^\top=I_n,V_AV_A^\top=I_n$, then~$A^\dagger = VS_A^\dagger U^\top,$ where~$S_A^\dagger$ is the pseudo-inverse of the diagonal matrix of the singular values,~$S_A$ (with~$0^\dagger = 0$). When~$A$ is full-rank, we have~$A^\dagger=A^{-1},I_A =I_n$. Since~$\ol{\gamma} = \rank(A)$, the singular vectors ($U_A,V_A$) can be arranged such that
\begin{eqnarray}
I_A  & = AA^\dagger = U_AS_AV^\top_A V_AS_A^\dagger U_A^\top = U_AS_A S_A^\dagger U_A^\top,\\
& = U_A\left[
\begin{array}{cc}
\mb{0}_{\ol{\gamma}\times\ol{\gamma}}\\
&I_{\ul{\gamma}}
\end{array}
\right]
U_A^\top.
\end{eqnarray}
From the above, the projection matrix,~$I_A$, is symmetric with orthogonal eigenvectors (or left and right singular vectors),~$U_A$, such that its eigenvalues (singular values) are either~$0$'s or~$1$'s.

For some~$W=\{w_{ij}\}\in\mbb{R}^{N\times N}$ and some~$A=\{a_{ij}\}\in\mbb{R}^{n\times n}$ with~$w_{ij},a_{ij}\in\mbb{R}$, the matrix Kronecker product is
\begin{eqnarray}
W\otimes A =
\left[\begin{array}{cccc}
w_{11}A & w_{12}A & \ldots&w_{1N}A\\
\vdots & \vdots & \ddots&\vdots\\
w_{N1}A & w_{N2}A & \ldots&w_{NN}A\\
\end{array}\right],
\end{eqnarray}
which lies in~$\mbb{R}^{nN\times nN}$. It can be verified that~$I_N\otimes A$ is a block-diagonal matrix where each diagonal block is~$A$ with a total of~$N$ blocks. We have $W\otimes A = (W\otimes I_n) (I_N\otimes A).$ The following properties are useful in the context of this paper.
\begin{eqnarray}
\left(W\otimes I_n\right)\left(I_N\otimes A\right) &=& \left(I_N\otimes A\right)\left(W\otimes I_n\right),\\
\left(W\otimes I_n\right)^k &=& (W^k\otimes I_n),
\end{eqnarray}
for some non-negative integer,~$k$. More details on these notions can be found in~\cite{hornJ}.

\section{Problem Formulation}\label{pf}
We consider average consensus in a multi-agent network when the inter-agent communication is subject to unwanted interference, i.e. the desired communication,~$\mb{x}^j\in\mbb{R}^n$, from agent~$j\in\mc{V}$ to agent~$i\in\mc{V}$ has an additive term,~$\mb{z}^{ij}\in\mbb{R}^n$, resulting into agent $i$ receiving~$\mt^j + \mb{z}^{ij}$ from agent $j$. We consider the case when this unwanted interference is linear. In particular, every link,~$j\ra i$ or~$(i,j)\in\mc{E}$, incurs the following additive interference:
\begin{eqnarray}
\mb{z}^{ij} = \sum_{m\in\mc{V}}a_{ij}^m\Gamma_{ij}^m\mt^{m},
\end{eqnarray}
where:~$a_{ij}^m=1,$ if agent~$m\in\mc{V}$ interferes with~$j\ra i$, and~$0$ otherwise; and~$\Gamma_{ij}^m\in\mbb{R}^{n\times n}$ is the interference gain when~$m\in\mc{V}$ interferes with the~$j\ra i$ communication. What agent~$i$ actually receives from agent~$j$ is thus:
\begin{eqnarray}
\mt_k^j + \sum_{m\in\mc{V}} a_{ijk}^m \Gamma_{ijk}^m \mt_k^m,
\end{eqnarray}
at time~$k$, where the subscript `${ijk}$' introduces the time dependency on the corresponding variables, see Fig.~\ref{fig_gl}.
\begin{figure}[h!]
\centering
\includegraphics[width=2.5in]{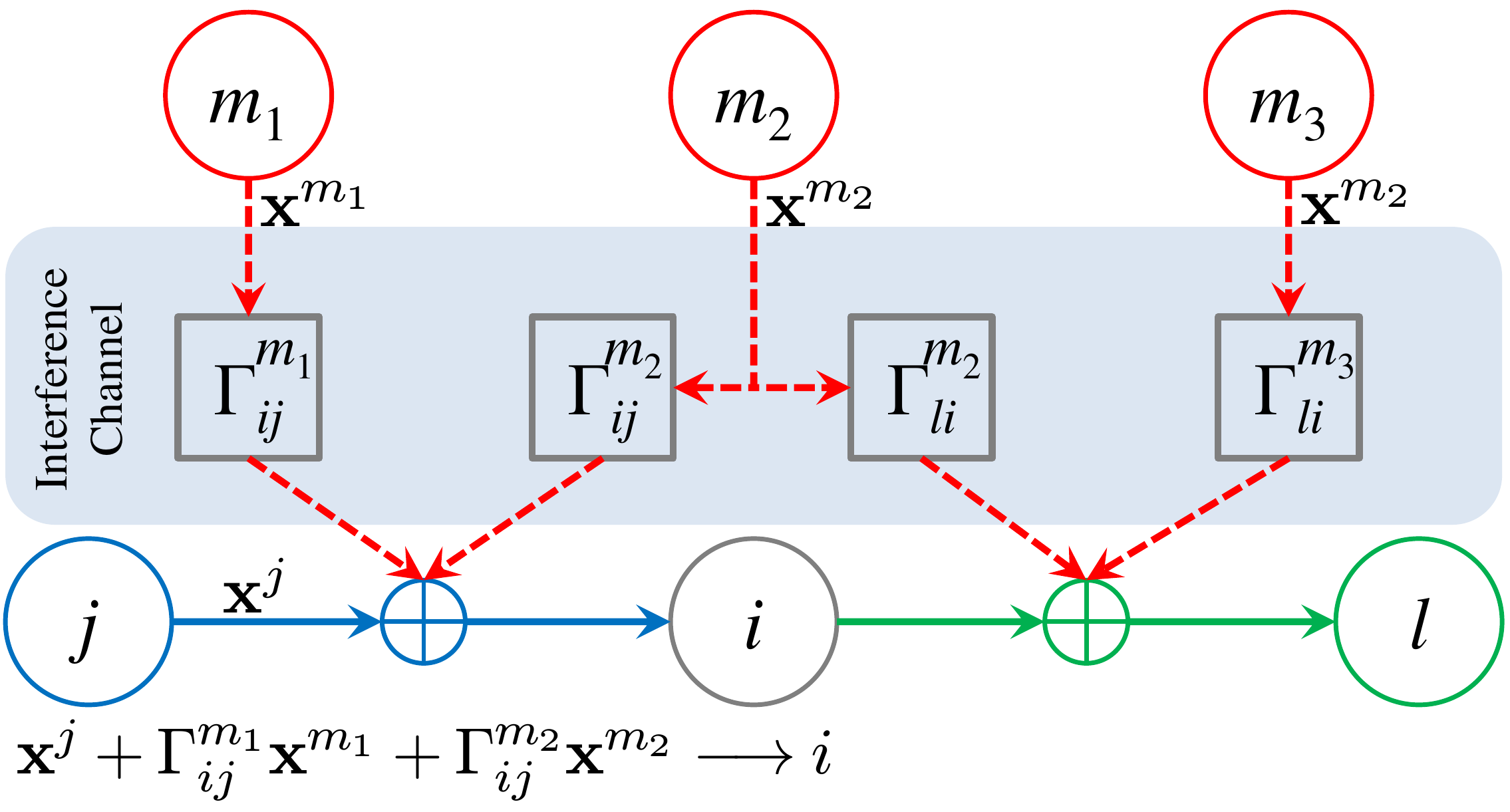}
\caption{Interference model: Note that agent~$j$ may also interfere with~$j\ra i$ communication, i.e.~$m_1$ or~$m_2$ can be~$j$. This may happen when agent~$j$'s transmission to agents other than agent~$i$ interfere with the~$j\ra i$ channel.}
\label{fig_gl}
\end{figure}

Given the interference setup, average-consensus implemented on the multi-agent network is given by
\begin{eqnarray}\label{cpi1}
\mb{x}_{k+1}^i = \sum_{j\in\mc{N}_i} w_{ij}\left(\mt_k^j + \sum_{m\in\mc{V}} a_{ijk}^m \Gamma_{ijk}^m \mt_k^m\right),
\end{eqnarray}
for~$k\geq0,i\in\mc{V}$, with~$\mt_0^i\in\mbb{R}^n$. Interference is only incurred when~$w_{ij}\neq0$, which is true for each~$j\in\mc{N}_i$, in general. In other words, interference is incurred on all the links that are allowed by the underlying communication graph,~$\mc{G}$. The protocol in Eq.~\eqref{cpi1} reduces to the standard average-consensus~\cite{boyd:04}, when there is no interference, i.e. when~${a}_{ijk}^m=0,$ for all~$i,j,k,m$, and converges to\footnote{See~\cite{boyd:04} for relevant conditions for convergence:~$W\mb{1}_n=\mb{1}_n, \mb{1}_n^\top W = \mb{1}_n^\top,~\mc{G}$ is strongly-connected, and~$w_{ij}\neq0$ for each~$(i,j)\in\mc{E}$.}
\begin{eqnarray}\label{pavg}
\mt_\infty^i \triangleq \lim_{k\ra\infty}\mt_k^i = \frac{1}{N}\sum_{j=1}^N\mt_0^j.
\end{eqnarray}
However, when there is interference, i.e.~$a_{ijk}^m\neq{0}$, Eq.~\eqref{cpi1}, in general, either goes to zero or diverges at all agents. The former is applicable when the effect of the interference results into a stable weight matrix,~$W=\{w_{ij}\}$, and the latter is in effect when the interference forces the spectral radius of the weight matrix to be greater than unity. The primary reason is that if~$w_{ij}$'s are chosen to sum to~$1$ in each neighborhood (to ensure~$W\mb{1}^\top=\mb{1}^\top$), their effective contribution in Eq.~\eqref{cpi2} is not~$1$ because of the unwanted interference. 

This paper studies appropriate modifications to Eq.~\eqref{cpi1} in order to achieve average-consensus. The design in this paper is based on a novel \emph{information alignment} principle that ensures that the spectral radius of the mixing matrix,~$W$, is not displaced form unity. We assume the following:
\begin{enumerate}[(a)]
\item \emph{No agent,~$i\in\mc{V}$, knows which (or how many) agents are interfering with its incoming or outgoing communication.}
\item \emph{The interference structure,~$a_{ijk}^m$ and~$\Gamma_{ijk}^m$, are constant over time,~$k$.} 

This assumption is to keep the exposition simple and is made without loss of generality as we will elaborate later.
\end{enumerate}

Under these assumptions, the \emph{standard average-consensus protocol} is given by
\begin{eqnarray}\label{cpi2}
\mb{x}_{k+1}^i = \sum_{j\in\mc{N}_i} w_{ij}\mt_k^j + \sum_{j\in\mc{N}_i} w_{ij}\sum_{m\in\mc{V}} a_{ij}^m\Gamma_{ij}^m \mt_k^m,
\end{eqnarray}
for~$k\geq0,\mt_0^i\in\mbb{R}^n$. The goal of this paper is to consider \emph{distributed averaging operations} in the presence of interference not only to establish the convergence, but further to ensure that the convergence is towards a meaningful quantity. To these aims, we present a conservative solution to this problem in Section~\ref{aca}, which is further improved in Sections~\ref{s_uoi} and~\ref{s_uii} for some practically relevant scenarios.

\section{A Conservative Approach}\label{aca}
Before considering the general case within a conservative paradigm, we explore a special case of uniform interference in Sections~\ref{s_ui} and~\ref{ill_ui}. We then provide the generalization in Section~\ref{ui_gen} and shed light on the conservative solution.

\subsection{Uniform Interference}\label{s_ui}
Uniform interference is when each communication link in the network experiences the same interference gain, i.e.~$\Gamma_{ij}^m = \Gamma_1,\forall i,j,m$. In other words, all of the blocks in the interference channel of Fig.~\ref{fig_gl} represent the same interference gain matrix,~$\Gamma_1\in\mbb{R}^{n\times n}$. In this context, Eq.~\eqref{cpi2} is given by 
\begin{eqnarray}\label{cpi3}
\mb{x}_{k+1}^i &=& \sum_{j\in\mc{N}_i} w_{ij}\mt_k^j + \sum_{m\in\mc{V}} b_i^m\Gamma_1 \mt_k^m,
\end{eqnarray}
where~$b_i^m = \sum_{j\in\mc{N}_i} w_{ij}a_{ij}^m$. Here,~$b_i^m\neq 0$ means that agent~$m\in\mc{V}$ interferes with agent~$i\in\mc{V}$ over some of the messages (from~$j\in\mc{N}_i$) received by agent~$i$. In fact, an agent~$m\in\mc{V}$ may interfere with agent~$i$'s reception on multiple incoming links, while an interferer,~$m$, may also belong to~$\mc{N}_i$, i.e. the neighbors of agent~$i$. To proceed with the analysis, we first write Eq.~\eqref{cpi2} in its matrix form: Let~$B_1$ be an~${N\times N}$ matrix whose~`$im$'th element is given by~$b_i^m$. Define the network state at time~$k$:
\begin{eqnarray}
\mt_{k} = \left[
\begin{array}{cccc}
\left(\mt_k^1\right)^\top&\left(\mt_k^2\right)^\top&\ldots&\left(\mt_k^N\right)^\top
\end{array}
\right]^\top.
\end{eqnarray}
Then, it can be verified that Eq.~\eqref{cpi3} is compactly written as 
\begin{eqnarray}
\mt_{k+1} 
&=& \left(W\otimes I_n+B_1\otimes \Gamma_1\right)\mt_k. \label{cpm_s1}
\end{eqnarray}
The~$N\times N$ weight matrix,~$W$, has the sparsity pattern of the consensus graph,~$\mc{G}$, while the~$N\times N$ matrix,~$B_1$, has the sparsity pattern of what can be referred to as the \emph{interference graph}--induced by the interferers. We have the following result.
\begin{lem}\label{lem1}
If~$\Gamma_1\mt_0^i = \mb{0}_n,\forall i$, then~$\Gamma_1\mt_k^i = \mb{0}_n,\forall i,k.$
\end{lem}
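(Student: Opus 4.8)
The plan is to prove this by induction on $k$. The base case $k=0$ is exactly the hypothesis, so the work is in the inductive step: assuming $\Gamma_1\mt_k^i=\mb{0}_n$ for all $i$, show the same holds at time $k+1$. For this I would go back to the componentwise recursion~\eqref{cpi3}, which expresses $\mt_{k+1}^i$ as $\sum_{j\in\mc{N}_i}w_{ij}\mt_k^j+\sum_{m\in\mc{V}}b_i^m\Gamma_1\mt_k^m$. Applying $\Gamma_1$ on the left and using linearity, the interference term becomes $\sum_m b_i^m\Gamma_1^2\mt_k^m$, and by the inductive hypothesis $\Gamma_1\mt_k^m=\mb{0}_n$, so this whole sum vanishes. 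It remains to handle $\Gamma_1\sum_{j\in\mc{N}_i}w_{ij}\mt_k^j=\sum_{j\in\mc{N}_i}w_{ij}\Gamma_1\mt_k^j$, which again vanishes term by term by the inductive hypothesis (each $j\in\mc{N}_i$ is in particular an index for which $\Gamma_1\mt_k^j=\mb{0}_n$). Hence $\Gamma_1\mt_{k+1}^i=\mb{0}_n$ for every $i$, closing the induction.

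Alternatively, and perhaps more cleanly, I would argue in the stacked form~\eqref{cpm_s1}. Note that $\Gamma_1\mt_k^i=\mb{0}_n$ for all $i$ is equivalent to $(I_N\otimes\Gamma_1)\mt_k=\mb{0}_{nN}$. Left-multiplying~\eqref{cpm_s1} by $I_N\otimes\Gamma_1$ and using the mixed-product / commutation properties recorded in the Preliminaries, namely $(I_N\otimes\Gamma_1)(W\otimes I_n)=(W\otimes I_n)(I_N\otimes\Gamma_1)$ and $(I_N\otimes\Gamma_1)(B_1\otimes\Gamma_1)=(B_1\otimes\Gamma_1)(I_N\otimes\Gamma_1)$, one gets
\begin{eqnarray*}
(I_N\otimes\Gamma_1)\mt_{k+1}=\left(W\otimes I_n+B_1\otimes\Gamma_1\right)(I_N\otimes\Gamma_1)\mt_k.
\end{eqnarray*}
So if $(I_N\otimes\Gamma_1)\mt_k=\mb{0}_{nN}$ then $(I_N\otimes\Gamma_1)\mt_{k+1}=\mb{0}_{nN}$, and induction starting from the hypothesis at $k=0$ finishes the proof. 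Said differently, $\ker(I_N\otimes\Gamma_1)$ is an invariant subspace of the system matrix $W\otimes I_n+B_1\otimes\Gamma_1$, because that matrix leaves the kernel invariant once we know $B_1\otimes\Gamma_1$ kills it and $W\otimes I_n$ preserves it.

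There is no real obstacle here; the only thing to be careful about is bookkeeping the commutation of the Kronecker factors correctly (making sure $I_N\otimes\Gamma_1$ passes through both $W\otimes I_n$ and $B_1\otimes\Gamma_1$), and observing the key algebraic fact that makes the interference term collapse: $\Gamma_1$ acts on a quantity that already lies in $\ker\Gamma_1$, so $\Gamma_1^2$ applied to each $\mt_k^m$ is zero regardless of the coefficients $b_i^m$ — in particular this does not require any assumption on $W$ or $B_1$. I would present the stacked-form argument as the main proof since it is the shortest, and perhaps remark that this is precisely the property that will let the information-alignment scheme project transmissions into $\ker\Gamma_1$ and keep them there under the consensus iteration.
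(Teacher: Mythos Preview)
Your proposal is correct, and your stacked-form argument is essentially identical to the paper's proof: both establish the commutation $(I_N\otimes\Gamma_1)(W\otimes I_n+B_1\otimes\Gamma_1)=(W\otimes I_n+B_1\otimes\Gamma_1)(I_N\otimes\Gamma_1)$ and then push $(I_N\otimes\Gamma_1)$ through the iteration to hit $\mt_0$. The only cosmetic difference is that the paper writes the iterated power $(W\otimes I_n+B_1\otimes\Gamma_1)^{k+1}$ explicitly rather than phrasing it as an inductive step, and it does not spell out your componentwise variant.
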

\begin{proof}
Note that~$\Gamma_1\mt_k^i$ is a local operation at the~$i$th agent. This is equivalent to multiplying~$I_N\otimes\Gamma_1$ with the network vector,~$\mt_k$. From the lemma's statement, we have~$(I_N\otimes \Gamma_1)\mt_0=\mb{0}_{nN}.$ Now note that (recall Section~\ref{pre_not})
\begin{eqnarray*}
\left(I_N\otimes\Gamma_1\right)\left(W\otimes I_n + B_1\otimes \Gamma_1\right) =\left(W\otimes \Gamma_1 + B_1\otimes \Gamma_1^2\right),\\
=\left(W\otimes I_n + B_1\otimes \Gamma_1\right) \left(I_N\otimes\Gamma_1\right).
\end{eqnarray*}
Subsequently, multiply both sides of Eq.~\eqref{cpm_s1} by~$\left(I_N\otimes\Gamma_1\right)$:
\begin{eqnarray*}
\left(I_N\otimes\Gamma_1\right)\mt_{k+1} = \left(W\otimes I_n + B_1\otimes \Gamma_1\right) \left(I_N\otimes\Gamma_1\right)\mt_k,\\
= \left(W\otimes I_n + B_1\otimes \Gamma_1\right)^{k+1} \left(I_N\otimes\Gamma_1\right)\mt_0=\mb{0}_n,
\end{eqnarray*}
and the lemma follows.
\end{proof}
The above lemma shows that the effect of \emph{uniform interference} can be removed from the average-consensus protocol if the data (initial conditions) lies in the null space of the interference,~$\Gamma_1$. To proceed, let us denote the interference null space (of $\Gamma_1$) by $\Theta_{\Gamma_1}$. Recall that~$\oplus_i \mt_0^i$ denotes the subspace spanned by all of the initial conditions, the applicability of Lemma~\ref{lem1} is not straightforward because:
\begin{inparaenum}[(i)]
\item~$\dim(\oplus_i \mt_0^i)>\dim(\Theta_{\Gamma_1})$, in general; and,
\item even when~$\dim(\oplus_i \mt_0^i)\leq\dim(\Theta_{\Gamma_1})$, the data subspace,~$\oplus_i \mt_0^i$, may not belong to the null space of the interference,~$\Theta_{\Gamma_1}$.
\end{inparaenum}
However, intuitively, a scheme can be conceived as follows: \emph{Project} the data on a low-dimensional subspace,~$\mc{S}$, such that~$\dim(\mc{S})\leq\dim(\Theta_{\Gamma_1})$; and, \emph{Align} this projected subspace,~$\mc{S}$, on the null-space,~$\Theta_{\Gamma_1}$, of the interference. At this point, we must ensure that this alignment is reversible so that its effect can be undone in order to recover the projected data subspace,~$\mc{S}$. To this aim, we provide the following lemma.

\begin{lem}\label{Tlem}
For some~$0\leq\ul{\gamma}\leq n$, let~$\Gamma_1\in\mbb{R}^{n\times n}$ have rank~$\ol{\gamma}=n-\ul{\gamma}$, and let another matrix,~$I_{\mc{S}}\in\mbb{R}^{n\times n}$ have rank~$\ul{\gamma}$. There exists a full-rank preconditioning,~$T_1\in\mbb{R}^{n\times n}$, such that~$\Gamma_1 T_1I_{\mc{S}}=\mb{0}_{n\times n}$.
\end{lem}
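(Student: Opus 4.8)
The plan is to build $T_1$ explicitly as a rotation that carries the range space of $I_{\mc{S}}$ onto the null space of $\Gamma_1$, exploiting the fact that these two subspaces have the \emph{same} dimension, namely $\ul{\gamma}$. Indeed, $\Gamma_1$ has rank $\ol{\gamma}=n-\ul{\gamma}$, so $\dim\Theta_{\Gamma_1}=\ul{\gamma}$; and $I_{\mc{S}}$ has rank $\ul{\gamma}$, so $\dim(\oplus I_{\mc{S}})=\ul{\gamma}$. Once two subspaces have equal dimension, an invertible map from one onto the other exists and extends to an orthogonal map on all of $\mbb{R}^n$; the content of the lemma is just to exhibit one such map.

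First I would record the relevant decompositions. By the SVD, write $\Gamma_1=U_{\Gamma_1}S_{\Gamma_1}V_{\Gamma_1}^\top$ with $U_{\Gamma_1},V_{\Gamma_1}$ orthogonal, ordering the singular values so that the $\ol{\gamma}$ nonzero ones come first and the $\ul{\gamma}$ zero ones last; then $\Theta_{\Gamma_1}$ is exactly the span of the last $\ul{\gamma}$ columns of $V_{\Gamma_1}$. Since $I_{\mc{S}}$ is (as already noted in Section~\ref{pre_not}) a symmetric projection of rank $\ul{\gamma}$, diagonalize it as $I_{\mc{S}}=U_{\mc{S}}DU_{\mc{S}}^\top$ with $U_{\mc{S}}$ orthogonal and $D$ a $0$--$1$ diagonal matrix whose last $\ul{\gamma}$ entries equal $1$; then $\oplus I_{\mc{S}}$ is the span of the last $\ul{\gamma}$ columns of $U_{\mc{S}}$.

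Then I would set $T_1=V_{\Gamma_1}U_{\mc{S}}^\top$. Being a product of orthogonal matrices, $T_1$ is orthogonal, hence full-rank. For every index $j$ among the last $\ul{\gamma}$, $T_1$ sends the $j$th column of $U_{\mc{S}}$ to the $j$th column of $V_{\Gamma_1}$, which lies in $\Theta_{\Gamma_1}$; thus $\Gamma_1 T_1$ annihilates $\oplus I_{\mc{S}}$, and since $I_{\mc{S}}\mathbf{x}\in\oplus I_{\mc{S}}$ for every $\mathbf{x}\in\mbb{R}^n$ we conclude $\Gamma_1 T_1 I_{\mc{S}}=\mb{0}_{n\times n}$. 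Equivalently, $V_{\Gamma_1}^\top T_1 U_{\mc{S}}=I_n$, so $\Gamma_1 T_1 I_{\mc{S}}=U_{\Gamma_1}S_{\Gamma_1}\big(V_{\Gamma_1}^\top T_1 U_{\mc{S}}\big)DU_{\mc{S}}^\top=U_{\Gamma_1}(S_{\Gamma_1}D)U_{\mc{S}}^\top$, and $S_{\Gamma_1}D=\mb{0}_{n\times n}$ because the nonzero diagonal entries of $S_{\Gamma_1}$ and those of $D$ occupy complementary positions.

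I do not anticipate a serious obstacle here: the key structural fact, $\dim\Theta_{\Gamma_1}=\ul{\gamma}=\dim(\oplus I_{\mc{S}})$, is immediate from the rank hypotheses, and the construction above is a one-line verification. The only points requiring mild care are the consistent ordering of the singular values of $\Gamma_1$ and the eigenvalues of $I_{\mc{S}}$ (so that the zero block of $S_{\Gamma_1}$ lines up with the unit block of $D$ under $V_{\Gamma_1}^\top T_1 U_{\mc{S}}=I_n$), and the degenerate endpoints $\ul{\gamma}\in\{0,n\}$, where one of $\Gamma_1$, $I_{\mc{S}}$ is the zero matrix and the claim is trivial but the same $T_1$ still works.
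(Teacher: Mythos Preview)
Your proof is correct and follows essentially the same SVD-based construction as the paper: both build $T_1$ by sending the range of $I_{\mc{S}}$ onto the null space of $\Gamma_1$ via the right singular vectors, and your specific choice $T_1=V_{\Gamma_1}U_{\mc{S}}^\top$ is precisely the one the paper singles out at the end of its proof as a valid (though ``more restrictive'') instance. The only difference is that the paper exhibits a slightly larger family $T_1=[\,\ol{V}_1'\mid\ul{V}_1'\,]U_{\mc{S}}^\top$ (with $\oplus\ul{V}_1'=\oplus\ul{V}_1$ and $\ol{V}_1'$ arbitrary subject to invertibility), and it takes a general SVD of $I_{\mc{S}}$ rather than assuming symmetry, so its argument covers any rank-$\ul{\gamma}$ matrix, not only projections.
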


\begin{proof}
Since~$\Gamma_1$ has rank~$\ol{\gamma}$, there exists a singular value decomposition,~$\Gamma_1=U_1S_1V_1^\top$, where the~$n\times n$ diagonal matrix~$S_1$ is such that its first~$\ol{\gamma}$ elements are the singular values of~$\Gamma_1$, and the remaining~$\ul{\gamma}$ elements are zeros. With this structure on~$\mc{S}$, the matrix~$V_1$ can be partitioned into 
\begin{eqnarray}
V_1 = \left[
\begin{array}{ccc}
\ol{V}_1 & \ul{V}_1
\end{array}
\right],
\end{eqnarray}
(with~$\ol{V}_1\in\mbb{R}^{n\times \ol{\gamma}}$ and~$\ul{V}_1\in\mbb{R}^{n\times \ul{\gamma}}$), where~$\oplus\ul{V}_1$ is the null-space of~$\Gamma_1$. Similarly,~$I_{\mc{S}}=U_{\mc{S}}S_{\mc{S}}V_{\mc{S}}^\top$ with rank~$\ul{\gamma}$, where the matrices,~$U_{\mc{S}}$ and~$V_{\mc{S}}$, are arranged such that the first~$\ol{\gamma}$ diagonals of~$S_{\mc{S}}$ are zeros and the remaining are the~$\ul{\gamma}$ singular values of~$I_{\mc{S}}$. Define
\begin{eqnarray}\label{T0map}
T_1 = \left[
\begin{array}{ccc}
\ol{V}_1^\prime & \ul{V}_1^\prime
\end{array}
\right]U_{\mc{S}}^\top,
\end{eqnarray}
where~$\ul{V}_1^\prime$ is such that ~$\oplus\ul{V}_1^\prime = \oplus\ul{V}_1$, and~$\ol{V}_1^\prime$ is chosen arbitrarily such that~$T_1$ is invertile. With this construction, note that~$\ol{V}_1^\top\ul{V}_1^\prime$ is a zero matrix because~$\ol{V}_1$ is orthogonal to the column-span of~$\ul{V}_1$ (by the definition of the SVD). We have
\begin{eqnarray}\nonumber
\Gamma_1 T_1I_{\mc{S}} 
= U_1S_1\left[
\begin{array}{cc}
\ol{V}_1^\top\ol{V}_1^\prime & \mb{0}_{\ol{\gamma}\times\ul{\gamma}}\\
 \ul{V}_1^\top\ol{V}_1^\prime & \ul{V}_1^\top\ul{V}_1^\prime
\end{array}
\right] 
S_{\mc{S}}V_{\mc{S}}^\top
= U_1\mb{0}_{n\times n}V_{\mc{S}}^\top,
\end{eqnarray}
and the lemma follows. 
\end{proof}
The above lemma shows that the computation of the preconditioning only requires the knowledge of the (uniform) interference null-space,~$\Theta_{\Gamma_1}\triangleq\oplus{\ul{V}_1}$. Clearly,~$T_1=V_1U_{\mc{S}}^\top$ is a valid preconditioning as with this~$\Gamma_1T_1I_{\mc{S}}$ is a zero matrix, but this choice is more restrictive and not necessary. 

\emph{Information alignment}: Lemma~\ref{Tlem} further sheds light on the notion of \emph{information alignment}, i.e. the desired information sent by the transmitter can be projected and aligned in such a way that it is not distorted by the interference. Not only that the information remains unharmed, it can be recovered at the receiver as the preconditioning~$T$, is invertible. The following theorem precisely establishes the notion of information alignment with the help of Lemmas~\ref{lem1} and~\ref{Tlem}.
\begin{thm}[Uniform Interference]\label{cui_th}
Let~${\Theta_{\Gamma_1}}$ denote the null space of~$\Gamma_1$ and let~$\ul{\gamma}=\dim({\Theta_{\Gamma_1}})$. In the presence of uniform interference, the protocol in Eq.~\eqref{cpm_s1} recovers the average in a~$\ul{\gamma}$-dimensional subspace,~${\mc{S}}$, of~$\mbb{R}^n$, via an information alignment procedure based on the preconditioning. 
\end{thm}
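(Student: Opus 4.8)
The plan is to turn Lemmas~\ref{lem1} and~\ref{Tlem} into an explicit \emph{encode--iterate--decode} scheme and then fall back on the standard interference-free convergence result. First I would fix a target subspace $\mc{S}\subseteq\mbb{R}^n$ with $\dim\mc{S}=\ul{\gamma}$ and let $I_{\mc{S}}\in\mbb{R}^{n\times n}$ be the orthogonal projector onto $\mc{S}$, so that $\rank I_{\mc{S}}=\ul{\gamma}$ exactly as Lemma~\ref{Tlem} requires; a natural choice of $\mc{S}$ is the $\ul{\gamma}$-dimensional principal subspace of the collection $\{\mt_0^i\}$, but the argument is agnostic to the choice. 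Lemma~\ref{Tlem} then provides a full-rank preconditioning $T_1$ with $\Gamma_1T_1I_{\mc{S}}=\mb{0}_{n\times n}$, computable from $\Theta_{\Gamma_1}$ alone (through $\ul{V}_1$). Each agent runs the protocol not from $\mt_0^i$ but from the \emph{aligned} initial state $\my_0^i\triangleq T_1I_{\mc{S}}\mt_0^i$ --- a purely local projection-then-preconditioning operation.

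Next I would verify that the interference is inert on this aligned trajectory. By construction $\Gamma_1\my_0^i=\Gamma_1T_1I_{\mc{S}}\mt_0^i=\mb{0}_n$ for every $i$, so Lemma~\ref{lem1} applies with $\my_0$ in place of $\mt_0$ and yields $\Gamma_1\my_k^i=\mb{0}_n$ for all $i,k$. Feeding this back into Eq.~\eqref{cpi3} (equivalently Eq.~\eqref{cpm_s1}), the interference sum $\sum_{m\in\mc{V}}b_i^m\Gamma_1\my_k^m$ vanishes at every step, so the recursion on the aligned states collapses to the interference-free consensus update $\my_{k+1}^i=\sum_{j\in\mc{N}_i}w_{ij}\my_k^j$, i.e. $\my_{k+1}=(W\otimes I_n)\my_k$. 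This is precisely where challenge~(iii) is dispatched: the spectrum governing the aligned iteration is that of $W$ and is never perturbed away from unity.

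Then I would invoke the standard average-consensus guarantee~\cite{boyd:04} (under $W\mb{1}_N=\mb{1}_N$, $\mb{1}_N^\top W=\mb{1}_N^\top$, $\mc{G}$ strongly connected, and $w_{ij}\neq0$ on $\mc{E}$) to get
\begin{eqnarray*}
\my_\infty^i=\lim_{k\ra\infty}\my_k^i=\frac{1}{N}\sum_{j=1}^N\my_0^j=T_1I_{\mc{S}}\left(\frac{1}{N}\sum_{j=1}^N\mt_0^j\right),\qquad\forall i\in\mc{V}.
\end{eqnarray*}
Finally, since $T_1$ is invertible, each agent decodes locally by left-multiplying with $T_1^{-1}$, obtaining $T_1^{-1}\my_\infty^i=I_{\mc{S}}\big(\tfrac1N\sum_{j=1}^N\mt_0^j\big)$, which is the orthogonal projection of the true average $\mt_\infty=\tfrac1N\sum_j\mt_0^j$ onto the $\ul{\gamma}$-dimensional subspace $\mc{S}=\oplus I_{\mc{S}}$. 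Thus consensus is reached and the consensus value is characterized exactly as $I_{\mc{S}}\mt_\infty$, proving the theorem.

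The one step I would be most careful about is the coupling of the two lemmas: Lemma~\ref{Tlem} only places the \emph{initial} aligned states in $\Theta_{\Gamma_1}$, whereas consensus mixes states across agents, so one must certify the null-space property survives the iteration --- which is exactly what Lemma~\ref{lem1} delivers, and it is the uniformity $\Gamma_{ij}^m\equiv\Gamma_1$ that makes $I_N\otimes\Gamma_1$ commute with $W\otimes I_n+B_1\otimes\Gamma_1$ in that proof. I would also flag explicitly why the scheme is \emph{conservative}: a single $I_{\mc{S}}$ and a single $T_1$ are imposed network-wide, capping the recoverable dimension at $\ul{\gamma}=n-\ol{\gamma}$ with $\ol{\gamma}=\rank\Gamma_1$; this is what Sections~\ref{s_uoi}--\ref{s_uii} later relax by permitting link-dependent alignments.
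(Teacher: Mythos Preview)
Your proposal is correct and follows essentially the same approach as the paper: define the aligned initial states $\wh{\mt}_0^i=T_1I_{\mc{S}}\mt_0^i$, invoke Lemma~\ref{Tlem} to place them in $\Theta_{\Gamma_1}$, invoke Lemma~\ref{lem1} to propagate the null-space property through the iterations so that Eq.~\eqref{cpm_s1} collapses to interference-free consensus, and decode via $T_1^{-1}$ to obtain $I_{\mc{S}}\mt_\infty$. Your extra remarks on the commutation underlying Lemma~\ref{lem1} and on why the scheme is conservative are helpful expository additions but do not change the logical structure, which matches the paper's proof.
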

\begin{proof}
Without loss of generality, we assume that~${\mc{S}}={\oplus A}$, where~${\oplus A}$ denotes the range space (column span) of some matrix,~$A\in\mbb{R}^{n\times n}$, such that~$\dim({\oplus A})=\ul{\gamma}$. Define $I_{\mc{S}} = A^\dagger A$, 
where~$I_{\mc{S}}$ is the orthogonal projection that projects any arbitrary vector in~$\mbb{R}^n$ on~${\mc{S}}$. Define the projected (on~${\mc{S}}$) and transformed initial conditions:~$\wh{\mt}_0^i\triangleq T_1I_{\mc{S}}\mt_0^i,\forall i\in\mc{V}$, where~$T_1$ is the invertible preconditioning given in Lemma~\ref{Tlem}. From Lemma~\ref{Tlem}, we have
\begin{eqnarray}
\Gamma_1\wh{\mt}_0^i=\Gamma_1T_1I_{\mc{S}}\mt_0^i=\mb{0}_n,\qquad\forall i\in\mc{V},
\end{eqnarray}
i.e. the alignment makes the initial conditions invisible to the interference. From Lemma~\ref{lem1}, Eq.~\eqref{cpm_s1} reduces to $
\wh{\mt}_{k+1}^i = \sum_{j\in\mc{N}_i}w_{ij}\wh{\mt}^j_k$, when the initial conditions are~$\wh{\mt}_0^i, \forall i\in\mc{V}$, which converges to the average of the \emph{transformed and projected} initial conditions,~$\wh{\mt}_0^i$'s, under the standard average-consensus conditions on~$\mc{G}$ and~$W$. Finally, average in~${\mc{S}}$ is recovered by
\begin{eqnarray*}
\wt{\mt}_\infty^i = T_1^{-1}\wh{\mt}_{\infty}^i = \frac{1}{N}\sum_{j=1}^NT_1^{-1}\wh{\mt}_0^j = \frac{1}{N}\sum_{j=1}^N I_{\mc{S}}\mt_0^j,\qquad \forall i\in\mc{V},
\end{eqnarray*}
and the theorem follows.
\end{proof}
The above theorem shows that in the presence of uniform interference, a careful information alignment results into obtaining the data (initial conditions) average projected onto any arbitrary~$\ul{\gamma}$-dimensional subspace,~$\mc{S}$, of~$\mbb{R}^n$. We note that a completely distributed application of Theorem~\ref{cui_th} requires that each agent knows the null-space,~$\Theta_{\Gamma_1}$, of the (uniform) interference, recall Lemma~\ref{Tlem}; and thus is completely local. In addition, all of the agents are required to agree on the desired signal subspace,~$\mc{S}$, where the data is to be projected. 

\subsection{Illustration of Theorem~\ref{cui_th}}\label{ill_ui}
In essence, Theorem~\ref{cui_th} can be summarized in the following steps, illustrated with the help of Fig.~\ref{f11}:
\begin{enumerate}[(i)]
\item \emph{Project} the data,~$\mbb{R}^n$, on a~$\ul{\gamma}$-dimensional subspace,~${\mc{S}}$, via the projection matrix,~$I_{\mc{S}}$.
    
    In Fig.~\ref{f11} (a), the data (initial conditions) lies arbitrarily in~$\mbb{R}^3$ projected on a~$\ul{\gamma}=2$-dimensional subspace,~${\mc{S}}$, in Fig.~\ref{f11} (b). Interference is given by a rank~$1$ matrix,~$\Gamma_1$; the interference subspace is shown by the black line;
\item \emph{Align} the projected subspace,~${\mc{S}}$, on the null space,~${\Theta_{\Gamma_1}}$, of interference,~$\Gamma_1$, via the preconditioning,~$T_1$.
    
    In Fig.~\ref{f11} (c), the projected subspace,~${\mc{S}}$, is aligned to the null of space,~${\Theta_{\Gamma_1}}$, of the interference via preconditioning with~$T_1$. Note that after the alignment, the data is orthogonal to the interference subspace (black line);
\item \emph{Consensus} is implemented now on the null space of the interference, see Fig.~\ref{f11} (d).
\item \emph{Recover} the average in~${\mc{S}}$ via~$T_1^{-1}$.

Finally, the average in the null space,~${\Theta_{\Gamma_1}}$, is translated back to the the signal subspace,~${\mc{S}}$, via~$T_1^{-1}$. We also show the true average in~$\mbb{R}^3$ by the `$\star$', see Fig.~\ref{f11} (e).
\end{enumerate}
\begin{figure*}
\centering
\subfigure{\includegraphics[width=1.25in]{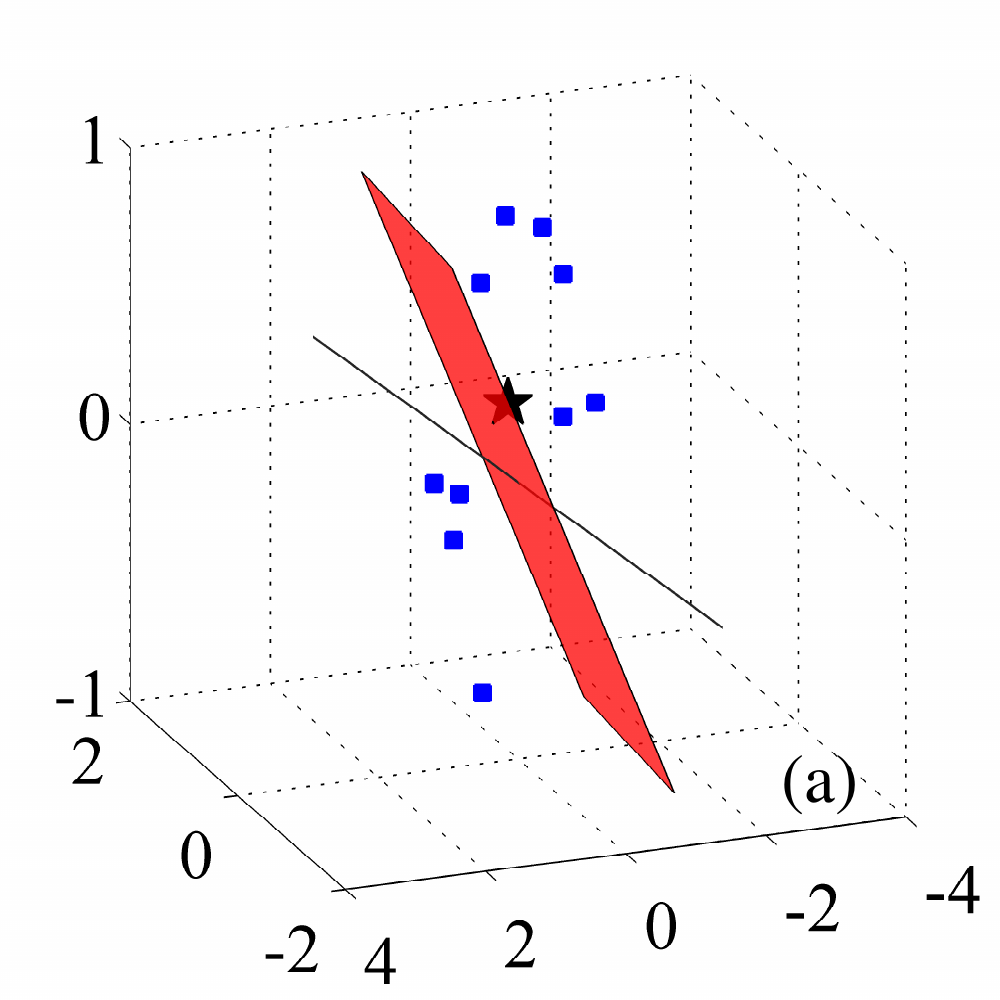}}
\subfigure{\includegraphics[width=1.25in]{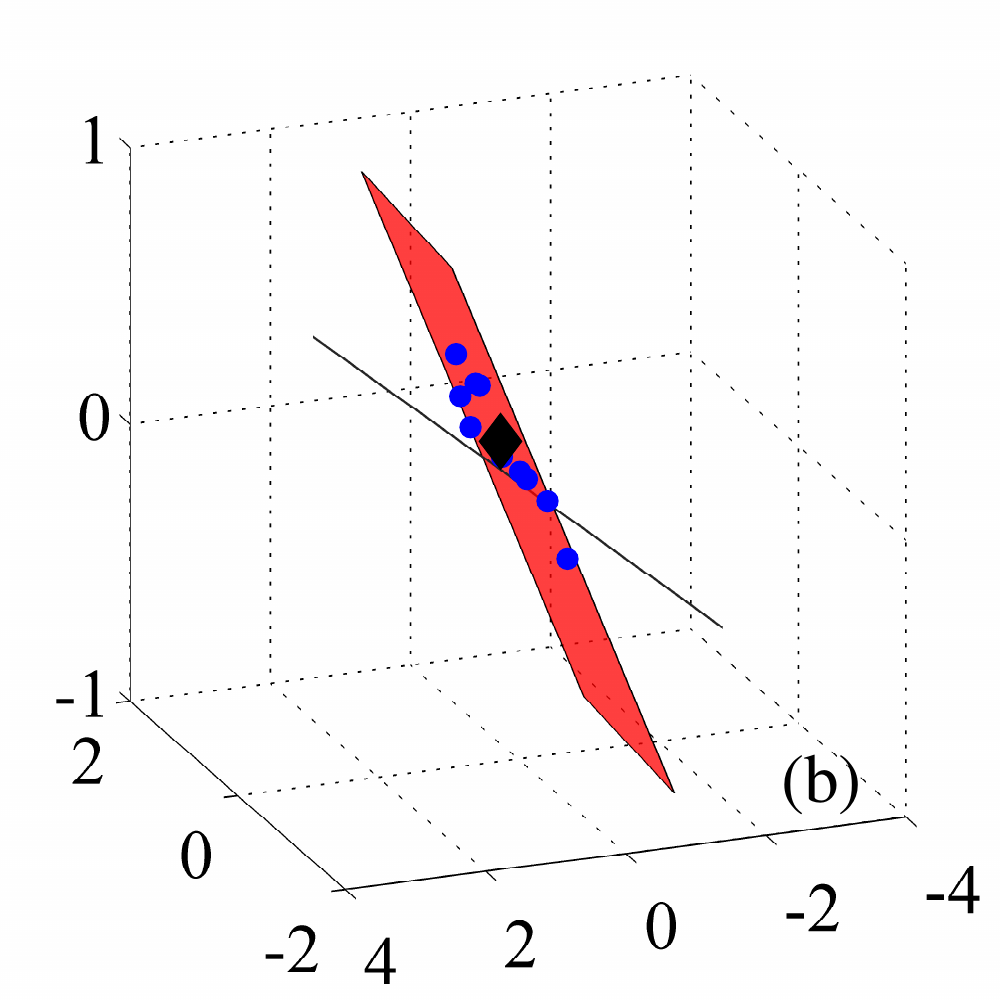}}
\subfigure{\includegraphics[width=1.25in]{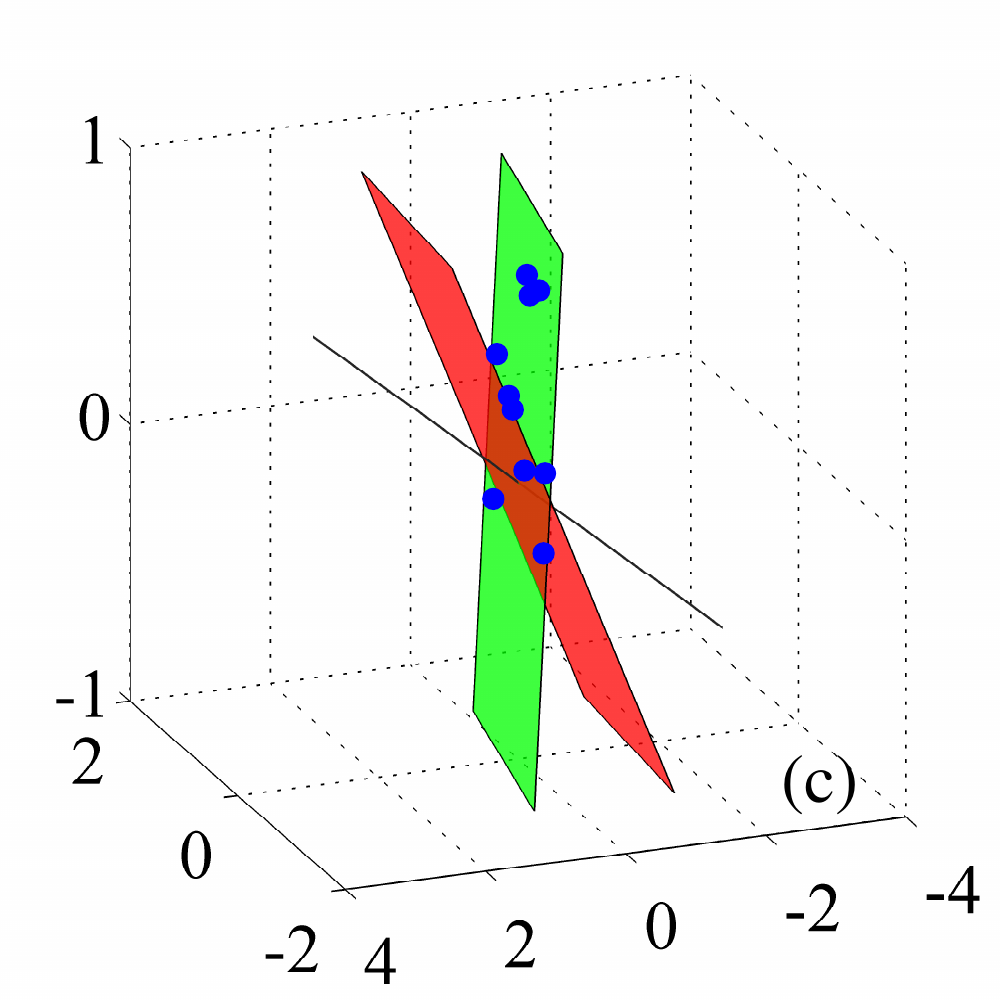}}
\subfigure{\includegraphics[width=1.25in]{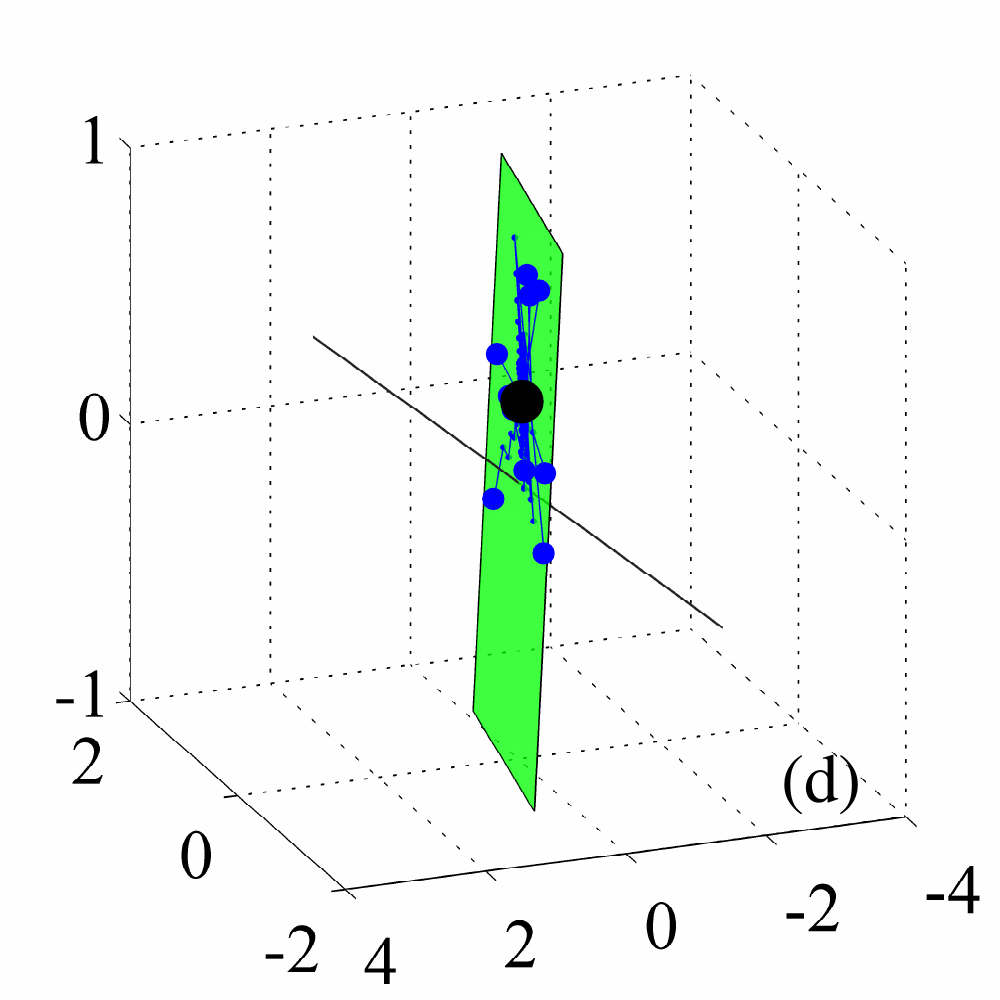}}
\subfigure{\includegraphics[width=1.25in]{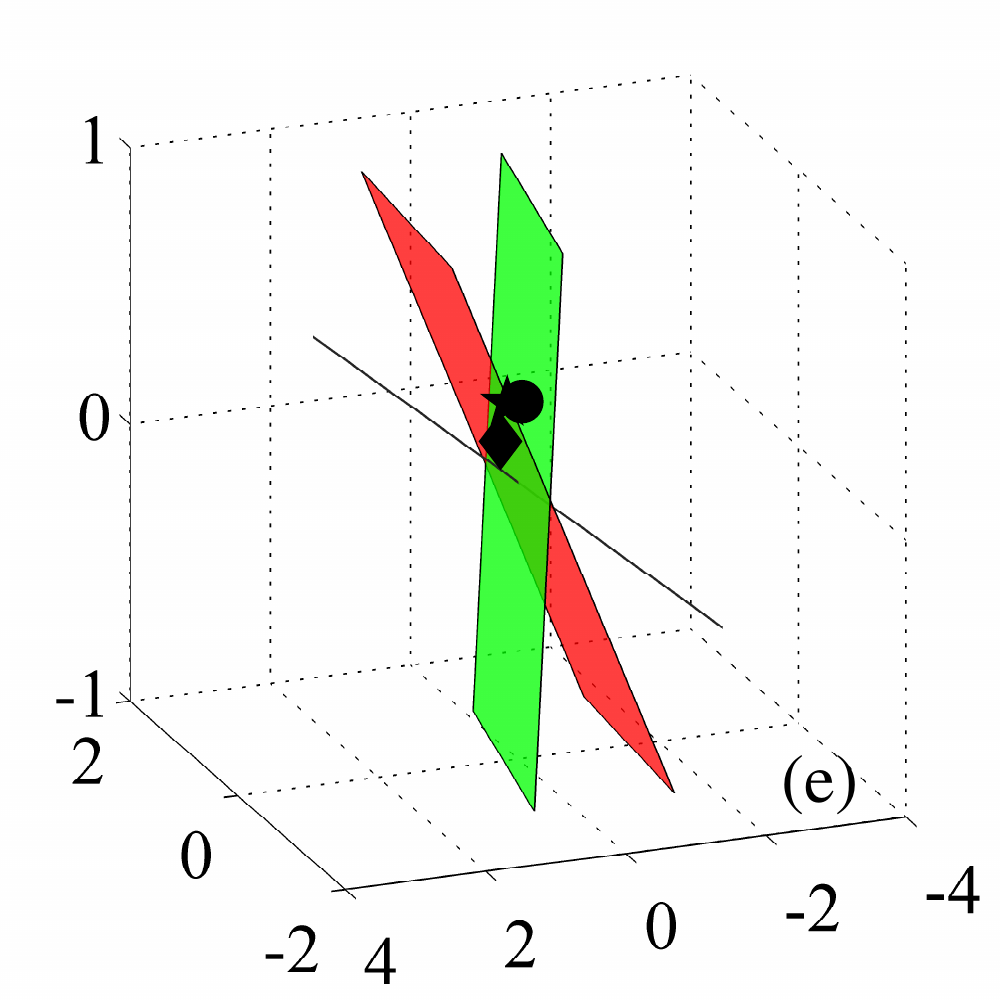}}
\caption{Consensus under uniform interference: (a) Signal space,~$\mbb{R}^3$, data shown as squares and the average as `$\star$'; (b) Projected signal subspace,~${\mc{S}}$, shown as circles and the average as `$\diamond$'; (c) Alignment on the null space of the interference,~$T_1I_{{\mc{S}}}\mt_0^i$; (d) Consensus in the null space of the interference,~$\wh{\mt}_k^i$, average shown as large filled circle; and, (e) Translation back to the signal subspace,~$T_1^{-1}\wh{\mt}_{\infty}^i$.}
\label{f11}
\end{figure*}

From Theorem~\ref{cui_th}, when~$\Gamma_1$ is full-rank, i.e.~$\ul{\gamma}=0,$ the iterations converge to a zero-dimensional subspace and are not meaningful. However, if the interference is low-rank, consensus under uniform interference may still remain meaningful. In fact, we can establish the following immediate corollaries.
\begin{cor}[Perfect Consensus]\label{cor1}
Let~$\mt_0^i\in\mbb{R}^n$ be such that~$\dim(\oplus_i \mt_0^i)\leq\dim(\Theta_{\Gamma_1})$. Then consensus under uniform interference, Eq.~\eqref{cpm_s1}, recovers the true average of the initial conditions,~$\mt_0^i$.
\end{cor}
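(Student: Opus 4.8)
The plan is to derive Corollary~\ref{cor1} as essentially the "zero preconditioning" specialization of Theorem~\ref{cui_th}, exploiting the fact that the hypothesis $\dim(\oplus_i \mt_0^i)\leq\dim(\Theta_{\Gamma_1})$ guarantees there is enough room in the null-space of $\Gamma_1$ to hold all the data. First I would choose the signal subspace $\mc{S}$ in Theorem~\ref{cui_th} to coincide with the interference null-space itself, i.e. take $\mc{S} = \Theta_{\Gamma_1}$ so that $\dim(\mc{S}) = \ul{\gamma}$ and $I_{\mc{S}}$ is the orthogonal projection onto $\Theta_{\Gamma_1}$. Then I would apply Theorem~\ref{cui_th} verbatim: the protocol in Eq.~\eqref{cpm_s1} recovers $\frac{1}{N}\sum_{j=1}^N I_{\mc{S}}\mt_0^j$ at every agent.

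The remaining step is to show this recovered quantity equals the \emph{true} average $\frac{1}{N}\sum_{j=1}^N \mt_0^j$. This is where the dimension hypothesis enters. Since $\dim(\oplus_i \mt_0^i)\leq \dim(\Theta_{\Gamma_1})$ alone is not enough — the data subspace need not lie inside $\Theta_{\Gamma_1}$, exactly the subtlety flagged in remark (ii) after Lemma~\ref{lem1} — I would instead argue that the agents are free to \emph{choose} $\mc{S}$ to be any $\ul{\gamma}$-dimensional subspace containing $\oplus_i \mt_0^i$; such a subspace exists precisely because $\dim(\oplus_i \mt_0^i)\leq\ul{\gamma}$. With that choice, each $\mt_0^i \in \oplus_i\mt_0^i \subseteq \mc{S} = \oplus A$, so by the property of projection matrices noted in Section~\ref{pre_not} (namely $I_A\mt_0^i = \mt_0^i$ when $\mt_0^i\in\oplus A$), we get $I_{\mc{S}}\mt_0^i = \mt_0^i$ for every $i$. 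Hence $\frac{1}{N}\sum_{j=1}^N I_{\mc{S}}\mt_0^j = \frac{1}{N}\sum_{j=1}^N \mt_0^j$, and Theorem~\ref{cui_th} then delivers the true average at every agent.

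The main obstacle — really the only substantive point — is reconciling the two roles the hypothesis plays: $\dim(\oplus_i\mt_0^i)\leq\ul{\gamma}$ is needed both to guarantee a $\ul{\gamma}$-dimensional $\mc{S}$ containing the data exists (so projection is lossless), and implicitly that the preconditioning $T_1$ from Lemma~\ref{Tlem} can map this $\mc{S}$ into $\Theta_{\Gamma_1}$ without collapsing it (which Lemma~\ref{Tlem} already supplies, since it only requires $\rank(I_{\mc{S}}) = \ul{\gamma}$ and produces an invertible $T_1$). One should be a little careful that the construction of $\mc{S}$ is a global design choice agreed upon by all agents — but this is exactly the same standing assumption already made in the discussion following Theorem~\ref{cui_th}, so no new machinery is required. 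Everything else is a direct invocation of Theorem~\ref{cui_th} and the elementary projection identity, so the proof will be short.
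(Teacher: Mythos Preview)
Your proposal is correct and is precisely the argument the paper has in mind: the paper states only that the corollary ``immediately follows from Theorem~\ref{cui_th},'' and your fleshing-out---choosing any $\ul{\gamma}$-dimensional $\mc{S}$ containing $\oplus_i\mt_0^i$ (possible by the dimension hypothesis), invoking Theorem~\ref{cui_th}, and using $I_{\mc{S}}\mt_0^i=\mt_0^i$ for $\mt_0^i\in\mc{S}$---is exactly the intended route. The initial detour through $\mc{S}=\Theta_{\Gamma_1}$ is unnecessary (as you yourself note), but the final argument is the right one.
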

\begin{cor}[Principal/Selective Consensus]
Let the initial conditions,~$\mt_0^i$, belong to the range space,~$\oplus A$, of some matrix,~$A\in\mbb{R}^{n\times n}$. Then consensus under uniform interference, Eq.~\eqref{cpm_s1}, recovers the average in a~$\ul{\gamma}=\dim(\Theta_{\Gamma_1})$ subspace that can be chosen along any~$\ul{\gamma}$ singular values of~$A$.
\end{cor}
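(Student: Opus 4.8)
# Proof Proposal for the Principal/Selective Consensus Corollary

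The plan is to show that this corollary follows from Theorem~\ref{cui_th} by a careful choice of the projection subspace~$\mc{S}$, exploiting the freedom in selecting which~$\ul{\gamma}$-dimensional subspace of~$\mbb{R}^n$ we project onto. The key observation is that when the initial conditions already lie in~$\oplus A$, projecting onto a subspace~$\mc{S}\subseteq\oplus A$ preserves exactly those components of the data that survive the projection, and Theorem~\ref{cui_th} then delivers the average of these preserved components.

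First I would set up the SVD of~$A$ as~$A = U_A S_A V_A^\top$ and recall that~$\oplus A$ is spanned by the left singular vectors corresponding to the nonzero singular values; since~$\rank(A)$ may exceed~$\ul{\gamma}$, the data subspace~$\oplus_i\mt_0^i\subseteq\oplus A$ can be genuinely higher-dimensional than~$\Theta_{\Gamma_1}$. The main step is then to pick~$\mc{S}$ to be the~$\ul{\gamma}$-dimensional subspace spanned by any chosen collection of~$\ul{\gamma}$ left singular vectors of~$A$ (say the ones indexed by a set~$\mc{I}$ with~$|\mc{I}|=\ul{\gamma}$), and to define~$I_{\mc{S}}$ as the orthogonal projection onto this~$\mc{S}$. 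This~$I_{\mc{S}}$ has rank~$\ul{\gamma}$, so the hypotheses of Lemma~\ref{Tlem} are met and there is an invertible preconditioning~$T_1$ with~$\Gamma_1 T_1 I_{\mc{S}} = \mb{0}_{n\times n}$.

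Next I would invoke Theorem~\ref{cui_th} with this particular~$\mc{S}$: running the protocol of Eq.~\eqref{cpm_s1} on the transformed and projected initial conditions~$\wh{\mt}_0^i = T_1 I_{\mc{S}}\mt_0^i$ makes the data invisible to the interference (by Lemma~\ref{Tlem}) and invariantly so across iterations (by Lemma~\ref{lem1}), and the standard consensus dynamics then converge to~$\frac{1}{N}\sum_j\wh{\mt}_0^j$, from which~$T_1^{-1}$ recovers~$\frac{1}{N}\sum_j I_{\mc{S}}\mt_0^j$. The final point to articulate is the interpretation: since each~$\mt_0^j\in\oplus A$, writing~$\mt_0^j$ in the left-singular-vector basis of~$A$ shows that~$I_{\mc{S}}\mt_0^j$ retains precisely the components of~$\mt_0^j$ along the~$\ul{\gamma}$ singular directions indexed by~$\mc{I}$, so the recovered quantity is the average of the initial conditions restricted to (``selected along'') those~$\ul{\gamma}$ singular values; choosing~$\mc{I}$ to index the~$\ul{\gamma}$ largest singular values gives ``principal'' consensus, hence the name.

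I expect the only real subtlety — not so much an obstacle as a point requiring care — to be the bookkeeping that~$\mc{S}$ must be a subspace of~$\oplus A$ for the ``average along singular values'' statement to be clean: if one instead projected onto an arbitrary~$\ul{\gamma}$-dimensional subspace not contained in~$\oplus A$, the recovered average would be a less interpretable linear functional of the data. Everything else is a direct specialization of Theorem~\ref{cui_th}, so the proof is short once the choice of~$\mc{S}$ is pinned down.
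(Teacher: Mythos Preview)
Your proposal is correct and takes essentially the same approach as the paper: the paper simply states that the corollary ``immediately follow[s] from Theorem~\ref{cui_th}'' and notes that the protocol ``can be tailored towards the~$\ul{\gamma}$ largest singular values (principal consensus), or towards any arbitrary~$\ul{\gamma}$ singular values (selective consensus),'' which is precisely your argument of choosing~$\mc{S}$ to be spanned by~$\ul{\gamma}$ selected left singular vectors of~$A$ and then invoking Theorem~\ref{cui_th}. Your write-up is more detailed than the paper's one-line justification but follows the identical logic.
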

The proofs of the above two corollaries immediately follow from Theorem~\ref{cui_th}. In fact, the protocol, Eq.~\eqref{cpm_s1}, can be tailored towards the~$\ul{\gamma}$ largest singular values (principal consensus), or towards any arbitrary~$\ul{\gamma}$ singular values (selective consensus). The former is applicable to the cases when the data (initial conditions) lies primarily along a few singular values. While the latter is applicable to the cases when the initial conditions are known to have meaningful components in some singular values. We now show a few examples on this approach.

\begin{ex}
Consider the initial conditions,~$\mt_0^i,\forall i$, to lie in the range space,~$\oplus A$, with the following:
\begin{eqnarray}\label{eq_ill1}
A = \left[
\begin{array}{cc}
1 & 1\\
1 & 1
\end{array}
\right],
I_{\mc{S}}= \frac{1}{2}\left[
\begin{array}{cc}
\frac{1}{2}&\frac{1}{2}\\
\frac{1}{2}&\frac{1}{2}
\end{array}
\right], U_{\mc{S}}= \left[
\begin{array}{rr}
\frac{-1}{\sqrt{2}}&\frac{-1}{\sqrt{2}}\\
\frac{1}{\sqrt{2}}&\frac{-1}{\sqrt{2}}
\end{array}
\right].
\end{eqnarray}
Clearly,~$\dim({\oplus A})=1$. Consider any rank~$1$ interference,~$\Gamma$:
\begin{eqnarray*}
\Gamma_1 = \alpha\left[
\begin{array}{cc}
1&1\\
1&1
\end{array}
\right],{\Theta_{\Gamma_1}} = \beta
\left[
\begin{array}{rr}
1\\
-1
\end{array}
\right],\qquad\alpha,\beta\in\mbb{R}.
\end{eqnarray*}
It can be easily verified that originally the data subspace,~$\oplus A$, is aligned with the interference subspace,~$\oplus \Gamma_1$, and standard consensus operation is not applicable as no agent knows from which agents and on what links this interference is being incurred (recall Assumption (a) in Section~\ref{pf}). In other words, each agent~$i$, implementing Eq.~\eqref{cpi1}, cannot ensure that~$\sum_{j\in\mc{N}_i}w_{ij} + \sum_{j\in\mc{N}_i}w_{ij}\sum_{m\in\mc{V}} a_{ij}^m =1$ for the above iterations to remain meaningful and convergent.

Following Theorem~\ref{cui_th}, we choose $T_1 = V_1U^\top_{{\mc{S}}}$, which can be verified to be a diagonal matrix with $1$ and $-1$ on the diagonal, resulting into~$\Gamma_1 T_1I_{\mc{S}}=\mb{0}_{2\times 2}$. The effect of preconditioning,~$T_1$, is to move the entire~$1$-dimensional signal subspace in the null space of the interference. Subsequently, 
\begin{eqnarray*}
\wh{\mt}_{k+1}^i = \sum_{j\in\mc{N}_i} w_{ij}\wh{\mt}^j_k + \sum_{m\in\mc{V}} b_{i}^{m}\Gamma_1\wh{\mt}^{m}_k = \sum_{j\in\mc{N}_i} w_{ij}\wh{\mt}^j_k + \mb{0}_n,
\end{eqnarray*}
when~$\wh{\mt}_0^i=T_1I_{\mc{S}}\mt_0^i=T_1\mt_0^i$, and true average is recovered via~$T_1^{-1}$ (see Corollary~\ref{cor1}).
\end{ex}
%

\subsection{A Conservative Generalization}\label{ui_gen}
In Section~\ref{s_ui}, we assume that the overall interference structure, recall Fig.~\ref{fig_gl}, is such that the interference gains are uniform, i.e.~$\Gamma_{ij}^m=\Gamma_1.$ We now provide a conservative generalization of Theorem~\ref{cui_th} to the case when the interferences do not have a uniform structure.
\begin{thm}\label{con_th}
Define~$\Gamma\in\mbb{R}^{n\times n}$ to be the network interference matrix such that
\begin{eqnarray}
\oplus_{i,j,m}{\Gamma_{ij}^m}~~\subseteq~~\oplus\Gamma,\qquad i,j,m,\in\mc{V}.
\end{eqnarray}
Let~${\Theta_{\Gamma}}$ be the null space of~$\Gamma$ with~$\ul{\gamma}=\dim({\Theta_{\Gamma}})$. The protocol in Eq.~\eqref{cpi2} recovers the average in a~$\ul{\gamma}$-dimensional subspace,~${\mc{S}}$, of~$\mbb{R}^n$, with an appropriate alignment.
\end{thm}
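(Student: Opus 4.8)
The plan is to reduce the non-uniform problem to the uniform one of Theorem~\ref{cui_th} by collapsing the whole family $\{\Gamma_{ij}^m\}$ into the single aggregate $\Gamma$ and then re-running the preconditioning argument almost verbatim. First I would fix $\Gamma$ through its null space, arranging $\Theta_\Gamma\subseteq\Theta_{\Gamma_{ij}^m}$ for every $i,j,m$ (e.g.\ by taking $\Theta_\Gamma=\bigcap_{i,j,m}\Theta_{\Gamma_{ij}^m}$), so that $\ul{\gamma}=\dim(\Theta_\Gamma)$ and $\rank(\Gamma)=n-\ul{\gamma}$; the range inclusion $\oplus_{i,j,m}\Gamma_{ij}^m\subseteq\oplus\Gamma$ is the companion bookkeeping that keeps the interference confined to $\oplus\Gamma$. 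Writing the target signal subspace as $\mc{S}=\oplus A$ with $I_{\mc{S}}=A^\dagger A$ (rank $\ul{\gamma}$, singular values in $\{0,1\}$), Lemma~\ref{Tlem} applies to the pair $(\Gamma,I_{\mc{S}})$ and supplies an invertible preconditioning $T\in\mbb{R}^{n\times n}$ with $\Gamma TI_{\mc{S}}=\mb{0}_{n\times n}$, computable from $\Theta_\Gamma$ alone.

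Next I would verify that the single alignment $T$ cancels every interference at once. Set $\wh{\mt}_0^i\triangleq TI_{\mc{S}}\mt_0^i$. Since $\Gamma TI_{\mc{S}}=\mb{0}_{n\times n}$, the columns of $TI_{\mc{S}}$ lie in $\Theta_\Gamma$, so $\wh{\mt}_0^i\in\Theta_\Gamma\subseteq\Theta_{\Gamma_{ij}^m}$ and hence $\Gamma_{ij}^m\wh{\mt}_0^i=\mb{0}_n$ for all $i,j,m$: one local preconditioning makes the data invisible to every interferer. I would then replace Lemma~\ref{lem1} by a short induction on~\eqref{cpi2}: if the iterates satisfy $\wh{\mt}_k^i\in\Theta_\Gamma$ for every $i$, then each interference term $\Gamma_{ij}^m\wh{\mt}_k^m$ vanishes, whence $\wh{\mt}_{k+1}^i=\sum_{j\in\mc{N}_i}w_{ij}\wh{\mt}_k^j$ is a linear combination of vectors of the subspace $\Theta_\Gamma$ and so again lies in $\Theta_\Gamma$; the base case $k=0$ is the observation just made. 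Thus, started from $\{\wh{\mt}_0^i\}$, protocol~\eqref{cpi2} collapses to the interference-free iteration $\wh{\mt}_{k+1}^i=\sum_{j\in\mc{N}_i}w_{ij}\wh{\mt}_k^j$. Invoking standard average-consensus convergence for $(\mc{G},W)$ gives $\wh{\mt}_k^i\to\frac1N\sum_{j=1}^N\wh{\mt}_0^j$ for every $i$, and applying $T^{-1}$ locally yields
\begin{eqnarray*}
\wt{\mt}_\infty^i \;=\; T^{-1}\wh{\mt}_\infty^i \;=\; \frac1N\sum_{j=1}^N T^{-1}\wh{\mt}_0^j \;=\; \frac1N\sum_{j=1}^N I_{\mc{S}}\mt_0^j,\qquad\forall i\in\mc{V},
\end{eqnarray*}
i.e.\ the average of the initial conditions projected onto the $\ul{\gamma}$-dimensional subspace $\mc{S}$, which is the assertion of the theorem.

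The step I expect to be the real obstacle is conceptual rather than computational: pinning down precisely which relation between $\Gamma$ and the family $\{\Gamma_{ij}^m\}$ the cancellation actually needs. The alignment works only if $\Theta_\Gamma\subseteq\Theta_{\Gamma_{ij}^m}$ for all $i,j,m$, which is genuinely stronger than a statement about column spans; I would secure it either by defining $\Gamma$ directly through the intersection of the individual null spaces as above, or by restricting to symmetric (normal) interference representatives, for which range and null space are orthogonal complements so that the stated range inclusion already forces the null-space inclusion. If one insists on using only the range condition, the fallback is the orthogonal-projection variant: project every updated state onto $\mc{S}=(\oplus\Gamma)^\perp$ at each step, which kills the interference since it lives in $\oplus\Gamma$, and recovers the average in that particular $\ul{\gamma}$-dimensional subspace.

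Finally I would underline why this generalization is only \emph{conservative}: a single $T$ must align to one subspace of dimension $\ul{\gamma}=n-\rank(\Gamma)$, and $\rank(\Gamma)$ --- tied to the span of all the interferences (dually, to the intersection of their null spaces) --- may be as large as $n$ even when each $\Gamma_{ij}^m$ is low-rank, collapsing $\mc{S}$ to $\{\mb{0}\}$; closing this gap for structured interference patterns is exactly the purpose of the uniform-outgoing and uniform-incoming refinements in Sections~\ref{s_uoi} and~\ref{s_uii}.
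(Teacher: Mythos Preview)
Your proposal is correct and mirrors the paper's own argument, which is the one-line remark that the result ``follows directly from Lemmas~\ref{lem1},~\ref{Tlem}, and Theorem~\ref{cui_th}.'' Your explicit induction and, in particular, your observation that the cancellation genuinely needs the null-space inclusion $\Theta_\Gamma\subseteq\Theta_{\Gamma_{ij}^m}$ (not merely the range inclusion stated in the hypothesis) supply rigor that the paper's terse proof leaves implicit.
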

The proof follows directly from Lemmas~\ref{lem1},~\ref{Tlem}, and Theorem~\ref{cui_th}. Following the earlier discussion, we choose a global preconditioning,~$T\in\mbb{R}^{n\times n}$, based on the null-space,~$\Theta_\Gamma$, of the network interference,~$\Gamma$. The solution described by Theorem~\ref{con_th} requires each interference to belong to some subspace of the network interference,~${\oplus}\Gamma$, and each agent to have the knowledge of this network interference. However, this global knowledge is not why the approach in Theorem~\ref{con_th} is \emph{conservative}, as we explain below.

Consider~${\oplus}_{i,j,}{\Gamma_{ij}^m}\subseteq\mbb{R}^n$, to be such that~$\dim\left({\oplus}{\Gamma_{ij}^m}\right)=1$, for each~$i,j,m\in\mc{V}$. In other words, each interference block in Fig.~\ref{fig_gl} is a one-dimensional line in~$\mbb{R}^n$. Theorem~\ref{con_th} assumes a network interference matrix,~$\Gamma$, such that its range space,~${\oplus}\Gamma$, includes every local interference subspace,~${\oplus}{\Gamma_{ij}^m}$. When each local interference subspace,~${\oplus}{\Gamma_{ij}^m}$, is one-dimensional, we can easily have~$\dim({\oplus}_{i,j,m}\Gamma_{ij}^m)=n$, subsequently requiring~$\dim(\oplus\Gamma)=n$. This happens when the local interference subspaces are not aligned perfectly. Theorem~\ref{cui_th} is a very special scenario when all of the local interference subspaces are exactly the same (perfectly aligned). Extending it to Theorem~\ref{con_th}, however, shows that when the local interference are misaligned,~${\oplus}\Gamma$ may have dimension~$n$, and consensus is only ensured on a zero-dimensional subspace, i.e. with~$I_{\mc{S}} = \mb{0}_{n\times n}$.

This limitation of Theorem~\ref{con_th} invokes a significant question: \emph{When all of the local interferences are misaligned such that their collection spans the entire~$\mbb{R}^n$, can consensus recover anything meaningful?} Is it true that Theorem~\ref{con_th} is the only candidate solution? In the next sections, we show that there are indeed \emph{distributed and local} protocols that can recover meaningful information. To proceed, we add another assumption, (c), to Assumptions~(a) and (b) in Section~\ref{pf}:
\begin{enumerate}[(c)]
\item \emph{The interference matrices,~$\Gamma_{ij}^m$, are independent over~$j$}. 
    
    Note that in our interference model, any agent~$m\in\mc{V}$ can interfere with~$j\ra i$ communication; from Assumption (a), these are unknown to either agent~$j$ or~$i$. Assumption (c) is equivalent to saying that this interference is only a function of the interferer,~$m\in\mc{V}$, or the receiver,~$i\in\mc{V}$, and is independent of communication link,~$j\ra i$.
\end{enumerate}
We consider the design and analysis in the following cases:

\emph{Uniform Outgoing Interference}:~$\Gamma_{i}^m=\Gamma_m,\forall i,m\in{\mc{V}}$. In this case, each agent,~$m\in\mc{V}$, interferes with every other agent via the same interference matrix,~$\Gamma_m$, see Fig.~\ref{uoi_T} (top). This case is discussed in Section~\ref{s_uoi};

\emph{Uniform Incoming Interference}:~$\Gamma_{i}^m = \Gamma_i,\forall i,m\in{\mc{V}}.$ In this case, each agent~$i$ incurs the same interference,~$\Gamma_i$, over all the interferers,~$m\in\mc{V}$, see Fig.~\ref{uoi_T} (bottom). This case is discussed in Section~\ref{s_uii}.

\begin{figure}[h!]
\centering
\subfigure{\includegraphics[width=2.5in]{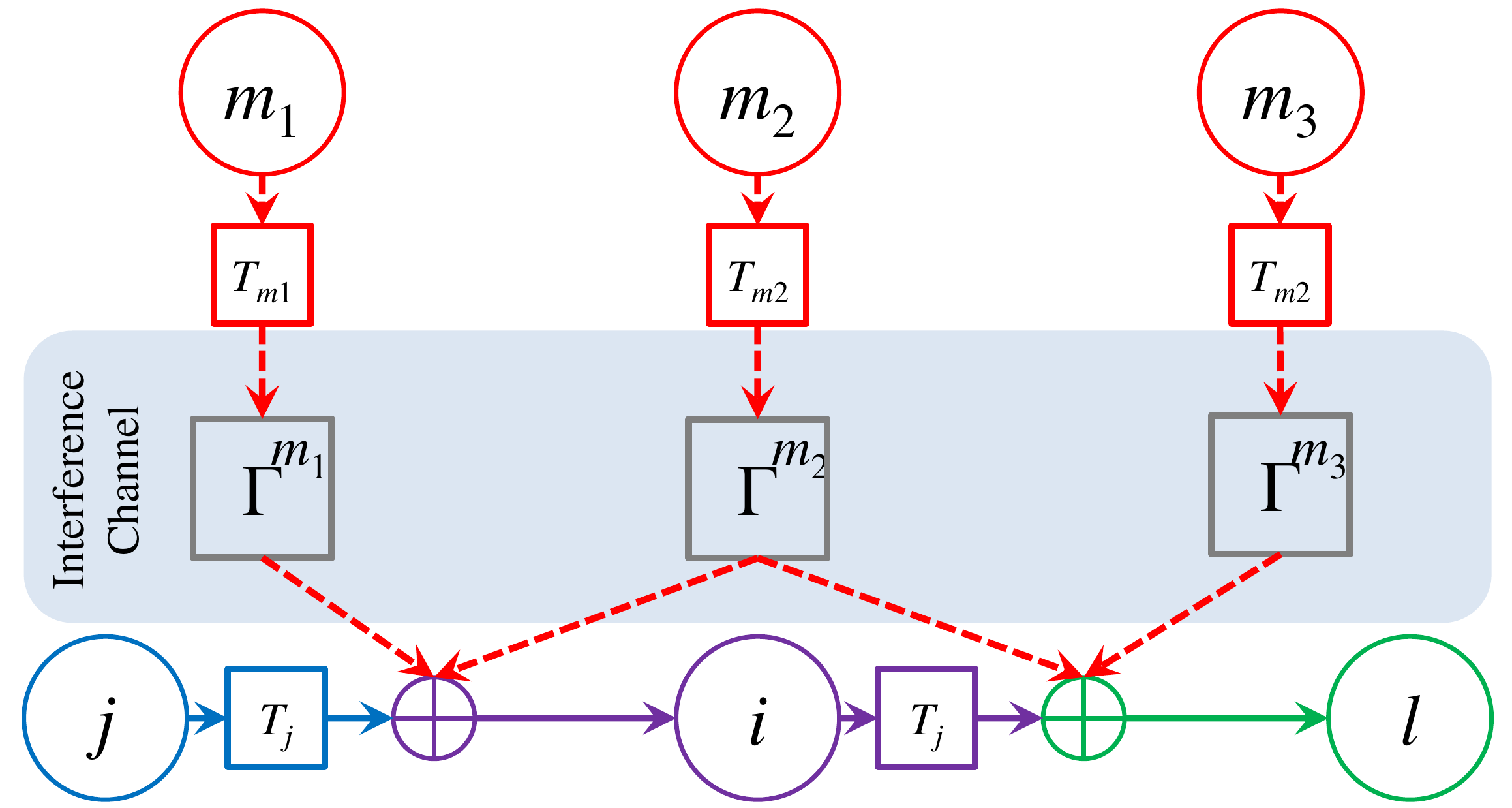}}
\hspace{1cm}
\subfigure{\includegraphics[width=2.5in]{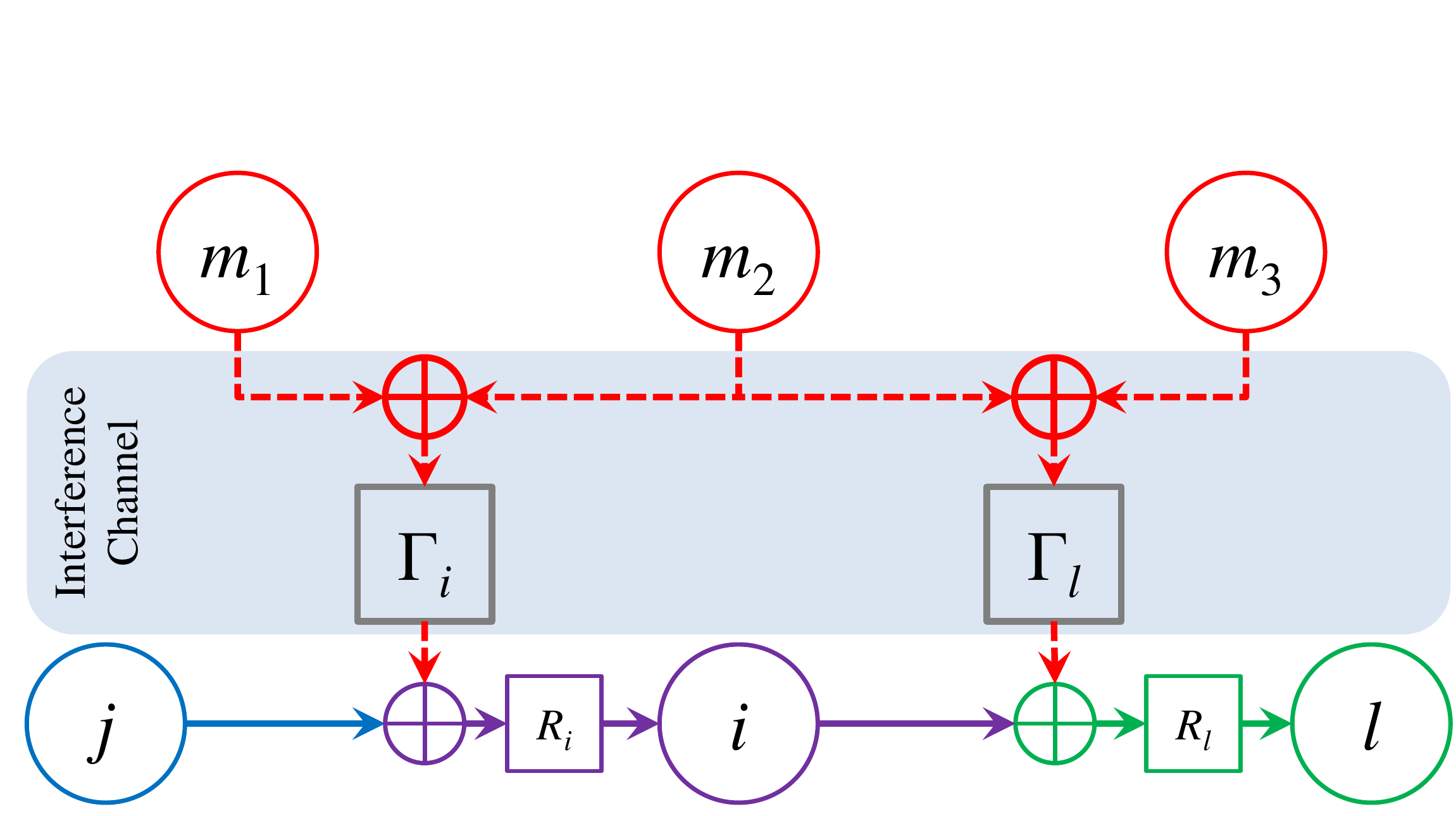}}
\caption{(Top) Uniform Outgoing (Bottom) Uniform Incoming. The blocks, $T_i$'s and $R_i$'s, will become clear from Sections~\ref{s_uoi} and \ref{s_uii}.}
\label{uoi_T}
\end{figure}

\section{Uniform Outgoing Interference}\label{s_uoi}
This section presents results for the uniform outgoing interference, i.e. each agent,~$m\in\mc{V}$, interferes with every other agent in the same way. Recall that agent~$j$ wishes to transmit~$\mt^j$ to agent~$i$ in the presence of interference. When this interference depends only on the interfere, agent~$i$ receives
\begin{eqnarray}
\mt_k^j + \sum_{m\in\mc{V}} a_{ij}^m\Gamma_m \mt_k^m,
\end{eqnarray}
from agent~$j$ at time~$k$. We modify the transmission as~$T_m\wt{\mt}_k^m,$ for all~$m\in\mc{V}$ for some auxiliary state variable,~$\wt{\mt}_k^i\in\mbb{R}^{n}$, to be explicitly defined shortly; agent~$i$ thus receives
\begin{eqnarray}
T_j\wt{\mt}_k^j + \sum_{m\in\mc{V}} a_{ij}^m\Gamma_m T_m\wt{\mt}_k^m,
\end{eqnarray}
from agent~$j$ at time~$k$. Consider the following protocol:
\begin{eqnarray}\label{cpi_uoia}
\wt{\mt}_{k+1}^i = \sum_{j\in\mc{N}_i} W_{ij}\left(T_j\wt{\mt}_k^j + \sum_{m\in\mc{V}} a_{ij}^m\Gamma_m T_m\wt{\mt}_k^m\right),
\end{eqnarray}
where~$W_{ij}\in\mbb{R}^{n\times n}$ is now a \emph{matrix} that agent~$i$ associates with agent~$j$; recall that earlier~$W_{ij} = w_{ij}I_n$. We get
\begin{eqnarray}\label{cpi_uoib}
\wt{\mt}_{k+1}^i = \sum_{j\in\mc{N}_i} W_{ij}T_j\wt{\mt}_k^j + \sum_{m\in\mc{V}}B_{im} \Gamma_m T_m\wt{\mt}_k^m,
\end{eqnarray}
where~$B_{im} = \sum_{j\in\mc{N}_i} W_{ij}a_{ij}^m$. We have the following result.
\begin{lem}\label{Tlem_uoi}
For some non-negative integer,~$\ul{\gamma}\leq n$, let each outgoing interference matrix,~$\Gamma_i$, have rank~$\ol{\gamma}\triangleq n-\ul{\gamma}$. Let~$I_{\mc{S}}\in\mbb{R}^{n\times n}$ be the projection matrix that projects~$\mbb{R}^n$ on~${\mc{S}}$, where~$\dim({\mc{S}})=\ul{\gamma}$. Then, there exist~$T_i$ at each~$i\in\mc{V}$, and~$W_{ij}$'s for all~$(i,j)\in\mc{E}$ such that Eq.~\eqref{cpi_uoib} becomes
\begin{eqnarray}\nonumber
\wt{\mt}_{k+1}^i = \sum_{j\in\mc{N}_i} w_{ij}\wt{\mt}_k^j,
\end{eqnarray}
at each~$i\in\mc{V}$, when~$\wt{\mt}_0^i\in {\mc{S}}$.
\end{lem}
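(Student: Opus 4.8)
The plan is to collapse Eq.~\eqref{cpi_uoib} onto the interference-free consensus recursion by choosing (i) each $T_i$ via Lemma~\ref{Tlem} so that agent~$i$'s outgoing interference annihilates the target subspace~$\mc{S}$, and (ii) the matrix weights $W_{ij}$ so as to simultaneously undo the per-link preconditioning on the received signal and keep the iterates inside~$\mc{S}$.

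First I would apply Lemma~\ref{Tlem} separately to each agent~$m\in\mc{V}$: since $\Gamma_m$ has rank $\ol{\gamma}=n-\ul{\gamma}$ and $I_{\mc{S}}$ has rank $\ul{\gamma}$, there is an invertible $T_m\in\mbb{R}^{n\times n}$ (constructible from the SVDs of $\Gamma_m$ and $I_{\mc{S}}$, using only the null space $\Theta_{\Gamma_m}$) with $\Gamma_m T_m I_{\mc{S}}=\mb{0}_{n\times n}$. Next I would set $W_{ij}=w_{ij}T_j^{-1}$ for every $(i,j)\in\mc{E}$, where $\{w_{ij}\}$ is any scalar weight set satisfying the standard average-consensus conditions on $\mc{G}$ and $W=\{w_{ij}\}$. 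With this choice $W_{ij}T_j=w_{ij}I_n$, so the signal part of Eq.~\eqref{cpi_uoib} is exactly $\sum_{j\in\mc{N}_i}w_{ij}\wt{\mt}_k^j$, independently of the interference term.

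The remaining work is to show the interference sum $\sum_{m\in\mc{V}}B_{im}\Gamma_m T_m\wt{\mt}_k^m$ vanishes at every $k$, which I would do by induction on $k$ with the invariant ``$\wt{\mt}_k^i\in\mc{S}$ for all $i\in\mc{V}$''. The base case is the hypothesis $\wt{\mt}_0^i\in\mc{S}$. For the inductive step, $\wt{\mt}_k^m\in\mc{S}$ gives $I_{\mc{S}}\wt{\mt}_k^m=\wt{\mt}_k^m$ (recall $I_A\mt=\mt$ whenever $\mt\in\oplus A$), hence $\Gamma_m T_m\wt{\mt}_k^m=(\Gamma_m T_m I_{\mc{S}})\wt{\mt}_k^m=\mb{0}_n$ for each $m$; so the whole interference sum is zero and Eq.~\eqref{cpi_uoib} reduces to $\wt{\mt}_{k+1}^i=\sum_{j\in\mc{N}_i}w_{ij}\wt{\mt}_k^j$, which, being a linear combination of vectors of the subspace~$\mc{S}$, again lies in~$\mc{S}$. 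This closes the induction and yields the claimed recursion.

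The main obstacle -- and the reason matrix-valued weights $W_{ij}$ (rather than scalars) are needed -- is precisely this propagation: the preconditioners $T_m$ only neutralize the interference on $\mc{S}$, so one must guarantee the state never leaves $\mc{S}$, and the choice $W_{ij}=w_{ij}T_j^{-1}$ is what both inverts the link preconditioning and renders $\mc{S}$ invariant. A secondary point to verify is that using $W_{ij}=w_{ij}T_j^{-1}$ does not spoil convergence: since the dynamics restricted to the invariant subspace $\mc{S}$ are governed by the scalar weights $w_{ij}$ alone, the standard spectral conditions on $W$ apply verbatim on $\mc{S}$, so the reduced recursion behaves exactly like interference-free consensus.
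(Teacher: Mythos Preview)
Your proposal is correct and follows essentially the same argument as the paper: construct each $T_m$ via Lemma~\ref{Tlem} so that $\Gamma_m T_m I_{\mc{S}}=\mb{0}_{n\times n}$, set $W_{ij}=w_{ij}T_j^{-1}$, and prove by induction that the iterates remain in~$\mc{S}$, which kills the interference term at every step. Your write-up is, if anything, slightly more explicit than the paper's in spelling out why $I_{\mc{S}}\wt{\mt}_k^m=\wt{\mt}_k^m$ is the key to the inductive step.
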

\begin{proof}
Without loss of generality, we assume that~${\mc{S}}={\oplus A}$, where~${\oplus A}$ denotes the range space of some matrix,~$A\in\mbb{R}^{n\times n}$, such that~$\dim({\oplus A})=\ul{\gamma}$. Define $I_{\mc{S}} = A^\dagger A$, where~$I_{\mc{S}}$ is the orthogonal projection that projects any arbitrary vector in~$\mbb{R}^n$ on~${\mc{S}}$. Define~$\wt{\mt}_0^i$ to be the projected initial conditions, i.e.~$\wt{\mt}_0^i\triangleq I_{\mc{S}}\mt_0^i$. Let~$T_i$ be the \emph{locally designed}, invertible preconditioning, obtained at each~$i\in\mc{V}$ from the null-space,~$\Theta_{\Gamma_i}$, of its outgoing interference matrix,~$\Gamma_i$, see Lemma~\ref{Tlem}. Clearly, following Lemma~\ref{Tlem}, we have $\Gamma_iT_i\wt{\mt}_0^i=\mb{0}_n,\forall i\in\mc{V}$. Choose
\begin{eqnarray}
W_{ij}=w_{ij}T_j^{-1}.
\end{eqnarray}
From Eq.~\eqref{cpi_uoib}, we have
\begin{eqnarray}\nonumber
\wt{\mt}_{k+1}^i = \sum_{j\in\mc{N}_i} w_{ij}\wt{\mt}_k^j + \sum_{m\in\mc{V}} B_{im} \Gamma_m T_m\wt{\mt}_k^m.
\end{eqnarray}
We claim that when~$\wt{\mt}_0^i\in\mc{S},\forall i\in\mc{V}$, then~$\wt{\mt}_k^i\in {{\mc{S}}},\forall i\in\mc{V},k$, proven below by induction. Consider~$k=0$, then
\begin{eqnarray}\nonumber
\wt{\mt}_{1}^i = \sum_{j\in\mc{N}_i} w_{ij}\wt{\mt}_0^j + \sum_{m\in\mc{V}} B_{im} \Gamma_m T_m\wt{\mt}_0^m = \sum_{j\in\mc{N}_i} w_{ij}\wt{\mt}_0^j,
\end{eqnarray}
which is a linear combination of vectors in~${\mc{S}}$ and thus lies in~${\mc{S}}$. Assume that~$\wt{\mt}_k^i\in{\mc{S}},\forall i\in\mc{V}$, and some~$k$, leading to~$\Gamma_iT_i\wt{\mt}_k^i=\mb{0}_n.$ Then for~$k+1$:
\begin{eqnarray}\nonumber
\wt{\mt}_{k+1}^i = \sum_{j\in\mc{N}_i} w_{ij}\wt{\mt}_k^j + \sum_{m\in\mc{V}} B_{im} \Gamma_m T_m\wt{\mt}_k^m = \sum_{j\in\mc{N}_i} w_{ij}\wt{\mt}_k^j,
\end{eqnarray}
which is a linear combination of vectors in~${\mc{S}}$.
\end{proof}
\noindent The main result on uniform outgoing interference is as follows.
\begin{thm}\label{th_uoi}
Let~$\Theta_{\Gamma_i}$ denote the null space of~$\Gamma_i$, and let~$\ul{\gamma}\triangleq\min_{i\in\mc{V}}\{\dim(\Theta_{\Gamma_i})\}$. In the presence of uniform outgoing interference, Eq.~\eqref{cpi_uoia} recovers the average in a~$\ul{\gamma}$-dimensional subspace,~${\mc{S}}$, of~$\mbb{R}^n$, when we choose~$T_i$ according to Lemma~\ref{Tlem}, and~$W_{ij}=w_{ij}T_j^{-1}$, at each~$i,j\in\mc{N}_i$.
\end{thm}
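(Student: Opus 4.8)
The plan is to assemble the statement from the two ingredients already in hand: Lemma~\ref{Tlem}, which produces a locally computable invertible preconditioning $T_i$ annihilating the range of $\Gamma_i$ on the chosen signal subspace, and Lemma~\ref{Tlem_uoi}, which shows that with the choices $W_{ij}=w_{ij}T_j^{-1}$ and $\wt{\mt}_0^i=I_{\mc{S}}\mt_0^i$ the interference terms vanish for all $k$ and the recursion collapses to the scalar-weighted form $\wt{\mt}_{k+1}^i=\sum_{j\in\mc{N}_i}w_{ij}\wt{\mt}_k^j$. First I would fix the signal subspace $\mc{S}$ of dimension $\ul{\gamma}=\min_{i\in\mc{V}}\{\dim(\Theta_{\Gamma_i})\}$ and note the key point that this choice of $\ul{\gamma}$ is exactly what makes Lemma~\ref{Tlem} applicable \emph{simultaneously at every agent}: each $\Gamma_i$ has rank $n-\dim(\Theta_{\Gamma_i})\le n-\ul{\gamma}$, so $\mc{S}$ fits inside each local null-space after preconditioning, and each $T_i$ can be built locally from $\Theta_{\Gamma_i}$ alone.

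Next I would invoke Lemma~\ref{Tlem_uoi} to reduce Eq.~\eqref{cpi_uoia} to the interference-free iteration $\wt{\mt}_{k+1}^i=\sum_{j\in\mc{N}_i}w_{ij}\wt{\mt}_k^j$ on the invariant subspace $\mc{S}$. Stacking the $\wt{\mt}_k^i$ into the network vector $\wt{\mt}_k$, this is $\wt{\mt}_{k+1}=(W\otimes I_n)\wt{\mt}_k$, which is precisely the standard average-consensus dynamics of~\cite{boyd:04}; under the standing conditions on $\mc{G}$ (strongly connected) and on $W$ (row- and column-stochastic with the sparsity of $\mc{G}$), $(W\otimes I_n)^k\to \tfrac{1}{N}(\mb{1}_N\mb{1}_N^\top)\otimes I_n$, so $\wt{\mt}_k^i\to\tfrac{1}{N}\sum_{j=1}^N\wt{\mt}_0^j=\tfrac{1}{N}\sum_{j=1}^N I_{\mc{S}}\mt_0^j$ for every $i$. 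Since $\wt{\mt}_0^j=I_{\mc{S}}\mt_0^j\in\mc{S}$, the limit is the projection onto $\mc{S}$ of the true average of the initial conditions, and it lies in the $\ul{\gamma}$-dimensional subspace $\mc{S}$ as claimed; this finishes the theorem. (Note that, unlike Theorem~\ref{cui_th}, no global inverse transform is needed at the end because the $T_j^{-1}$ factors have already been folded into the weights $W_{ij}$, so each agent directly holds the projected average.)

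The step I expect to be the main obstacle — or at least the one deserving the most care — is verifying that the two design choices are mutually consistent across \emph{every} link rather than at a single agent in isolation. Specifically, $T_j^{-1}$ appears in $W_{ij}$ at the receiver $i$, while the \emph{same} $T_j$ multiplies the transmitted state $\wt{\mt}_k^j$ at the transmitter $j$; the cancellation $W_{ij}T_j=w_{ij}I_n$ in the non-interference term and the vanishing $\Gamma_m T_m\wt{\mt}_k^m=\mb{0}_n$ in the interference term both rely on $\mc{S}$ being a common, globally agreed signal subspace and on each $T_i$ being built from that same $\mc{S}$ via Lemma~\ref{Tlem}. Once one checks that $\ul{\gamma}=\min_i\dim(\Theta_{\Gamma_i})$ is the largest dimension for which \emph{all} the local constructions in Lemma~\ref{Tlem} go through simultaneously, the induction of Lemma~\ref{Tlem_uoi} keeps $\wt{\mt}_k^i\in\mc{S}$ for all $k$, and the remaining convergence argument is just the classical one. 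The only genuine loss relative to the uniform case is that $\ul{\gamma}$ is now a minimum over agents rather than a single dimension, which is the price of locality.
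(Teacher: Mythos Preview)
Your proposal is correct and follows essentially the same route as the paper: the paper's proof is a single sentence, ``The proof follows from Lemma~\ref{Tlem_uoi},'' followed by writing out the limit $\wt{\mt}_\infty^i=\tfrac{1}{N}\sum_j I_{\mc{S}}\mt_0^j$. Your write-up is simply a more explicit unpacking of that citation, including the observation (which the paper leaves implicit) that taking $\ul{\gamma}=\min_{i}\dim(\Theta_{\Gamma_i})$ is what allows Lemma~\ref{Tlem} to be invoked simultaneously at every agent even though Lemma~\ref{Tlem_uoi} is stated for the equal-rank case.
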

The proof follows from Lemma~\ref{Tlem_uoi}. In other words, the consensus protocol in the presence of uniform outgoing interference, Eq.~\eqref{cpi_uoia}, converges to
\begin{eqnarray}
\wt{\mt}_\infty^i = \frac{1}{N}\sum_{j=1}^N \wt{\mt}^j_0 = \frac{1}{N}\sum_{j=1}^NI_{{\mc{S}}}\mt^j_0,
\end{eqnarray}
for any~$\mt_0^i\in\mbb{R}^n,\forall i\in\mc{V}.$ We note that each agent,~$i\in\mc{V}$, is only required to know the null-space of its outgoing interference,~$\Gamma_i$, to construct an appropriate preconditioning,~$T_i$. In addition, each agent,~$i\in\mc{V}$, is required to obtain the local pre-conditioners,~$T_j$'s, \emph{only} from its neighbors,~$j\in\mc{N}_i$; and thus, this step is also completely local. 

The protocol described in Theorem~\ref{th_uoi} can be cast in the purview of Fig.~\ref{uoi_T} (top). Notice that a transmission from any agent,~$i\in\mc{V}$, passes through agent~$i$'s dedicated preconditioning matrix,~$T_i$. The network (both non-interference and interference) sees only~$T_i\mt_k^i$ at each~$k$. Since the interference is a function of the transmitter (uniform outgoing), all of the agents ensure that a particular signal subspace,~$\mc{S}$, is not corrupted by the interference channel. The significance here is that even when the interferences are misaligned such that~${\oplus}_{i\in\mc{V}}{\Gamma_i} = \mbb{R}^n$, the protocol in Eq.~\eqref{cpi_uoia} recovers the average in~$\ul{\gamma}=\min_{i\in\mc{V}}\{\Theta_{\Gamma_i}\}$ dimensional signal subspace. On the other hand, the null space of the entire collection,~${\oplus}_{i\in\mc{V}}{\Gamma_i}$, may very well be~$0$-dimensional. For example, if each~$\Gamma_i$ is rank~$1$ such that each of the corresponding one-dimensional subspace is misaligned, Eq.~\eqref{cpi_uoia} recovers the average in an~$n-1$ dimensional signal subspace. On the other hand, Theorem~\ref{con_th} does not recover anything other than~$\mb{0}_n$.

\subsection{Illustration of Theorem~\ref{th_uoi}}
Let the initial conditions belong to a~$2$-dimensional subspace in~$\mbb{R}^3$ and consider~$N=10$ agents, with random initial conditions, shown as blue squares in Fig.~\ref{f11_uoi} (a). Uniform outgoing interference is chosen as one of the three~$1$-dimensional subspaces such that each interference appears at some agent in the network, see Fig.~\ref{f11_uoi} (b). Clearly, each interference is misaligned and~$\dim({\oplus}_i{\Gamma_i})=n=3.$ Hence, the protocol following Theorem~\ref{con_th} requires the signal subspace to be~$n-\dim({\oplus}_i{\Gamma_i})=0$ dimensional. However, when the agent transmissions are preconditioned using~$T_i$'s, each agent projects its transmission on the null space of its interference. Each receiver,~$i\in\mc{V}$, receives a misaligned data,~$T_j\mt^j$, from each of its neighbors,~$j\in\mc{N}_i$, see Fig.~\ref{f11_uoi} (c). Since each~$T_j\mt^j$ is a function of the corresponding neighbor,~$j$, the data can be translated back to~$\mc{S}$ via~$T_j^{-1}$, which is incorporated in the consensus weights,~$W_{ij}=w_{ij}T_j^{-1}$.
\begin{figure*}
\centering
\subfigure[]{\includegraphics[width=1.5in]{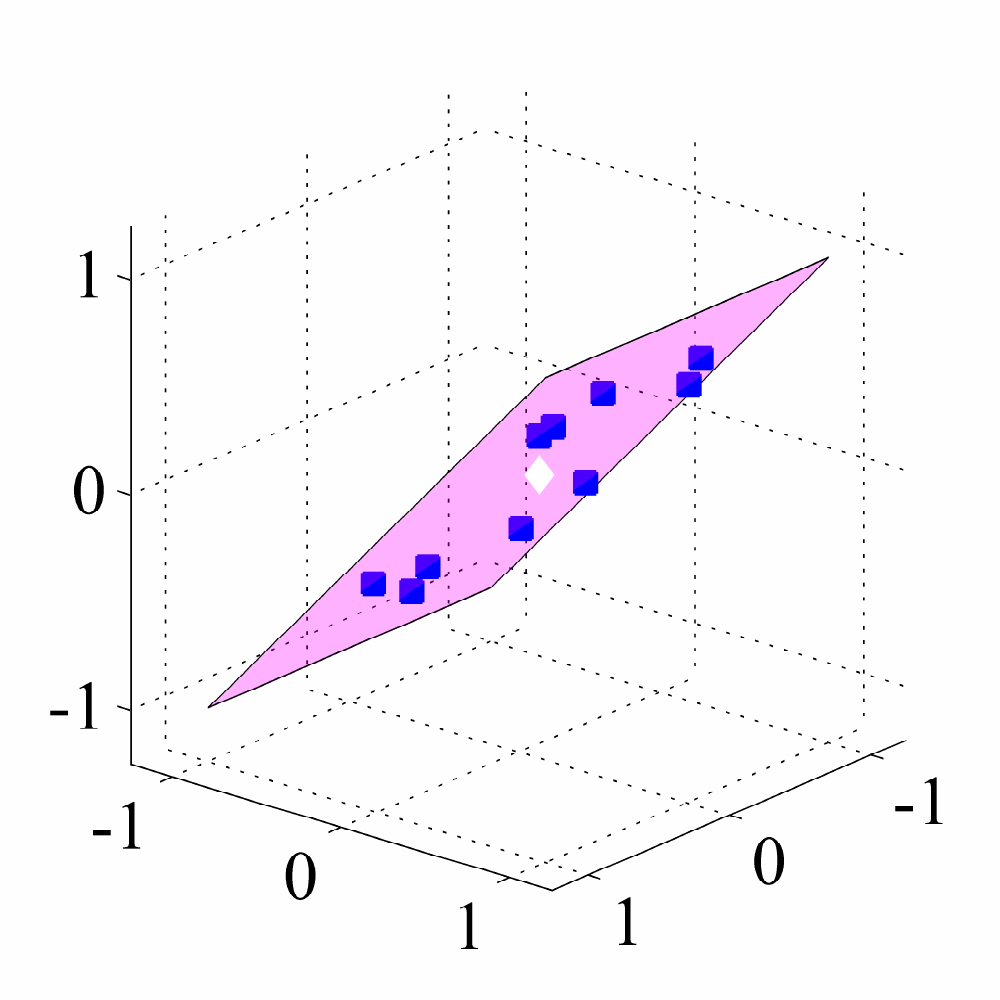}}
\subfigure[]{\includegraphics[width=1.5in]{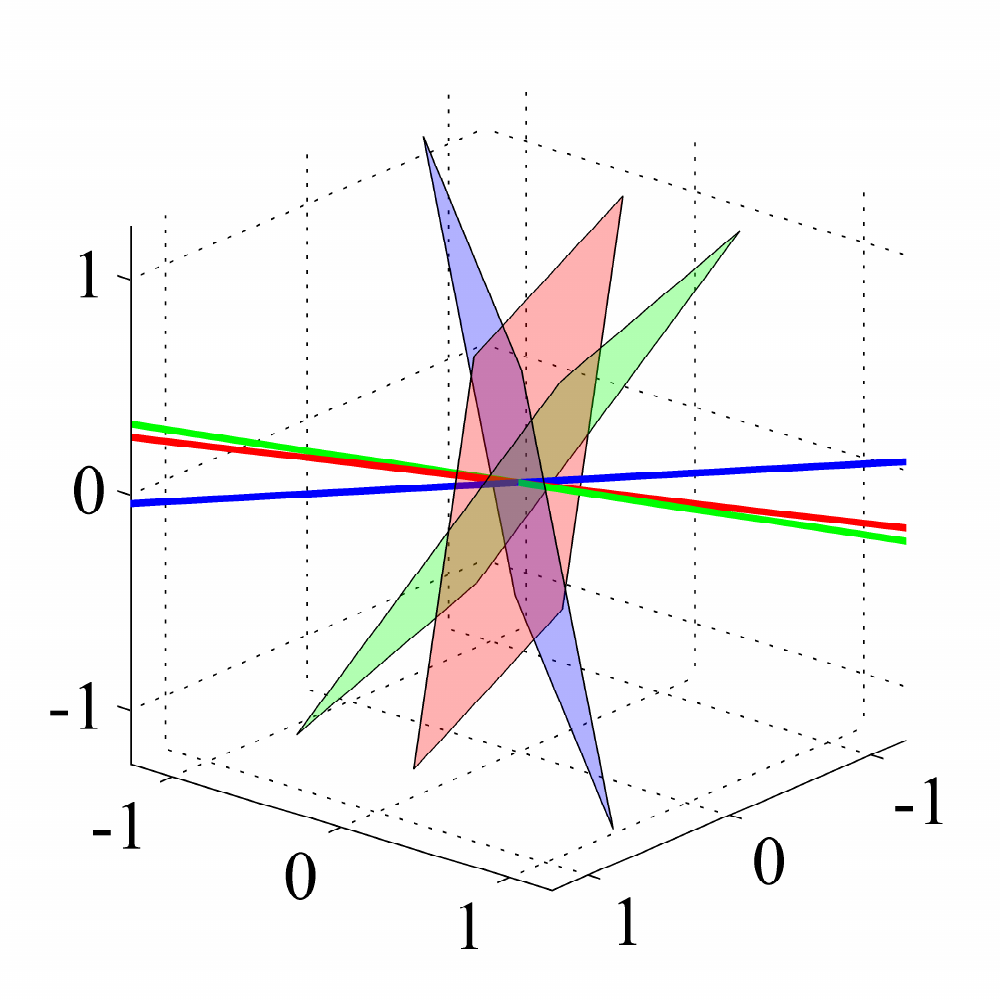}}
\subfigure[]{\includegraphics[width=1.5in]{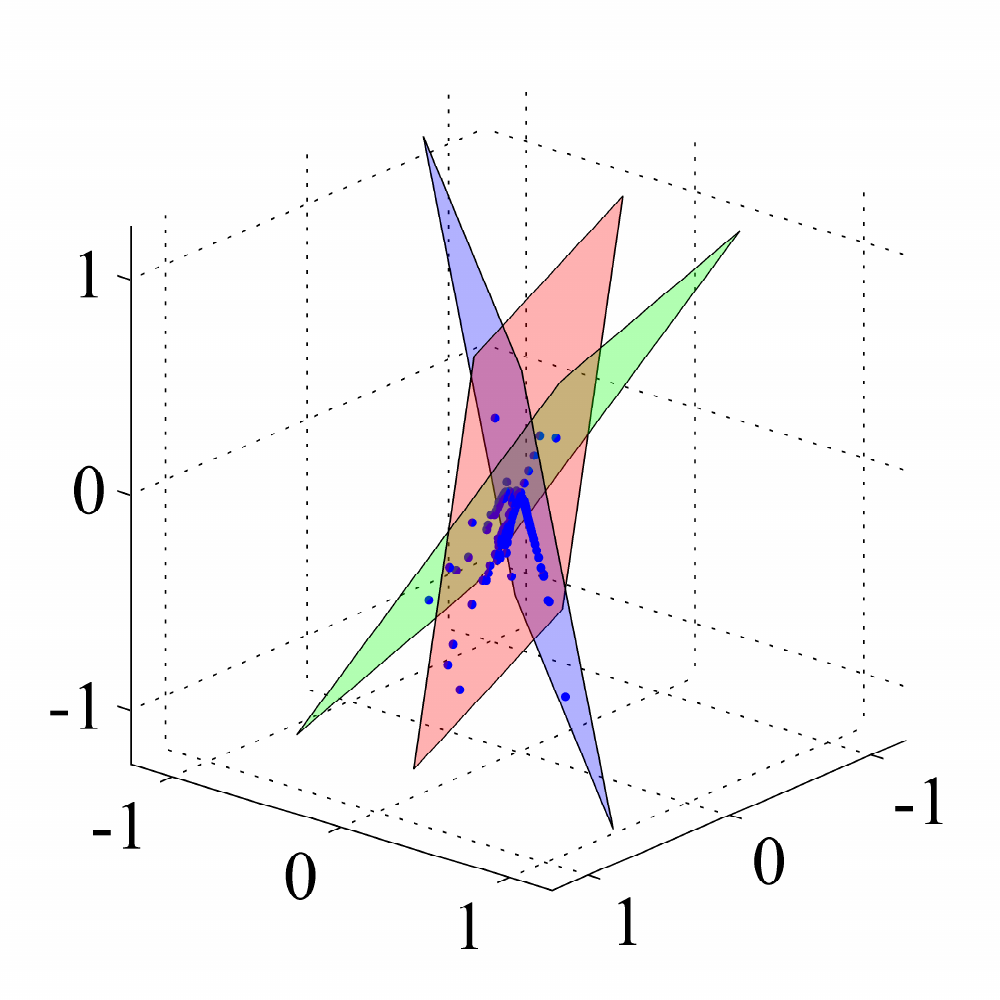}}
\subfigure[]{\includegraphics[width=1.5in]{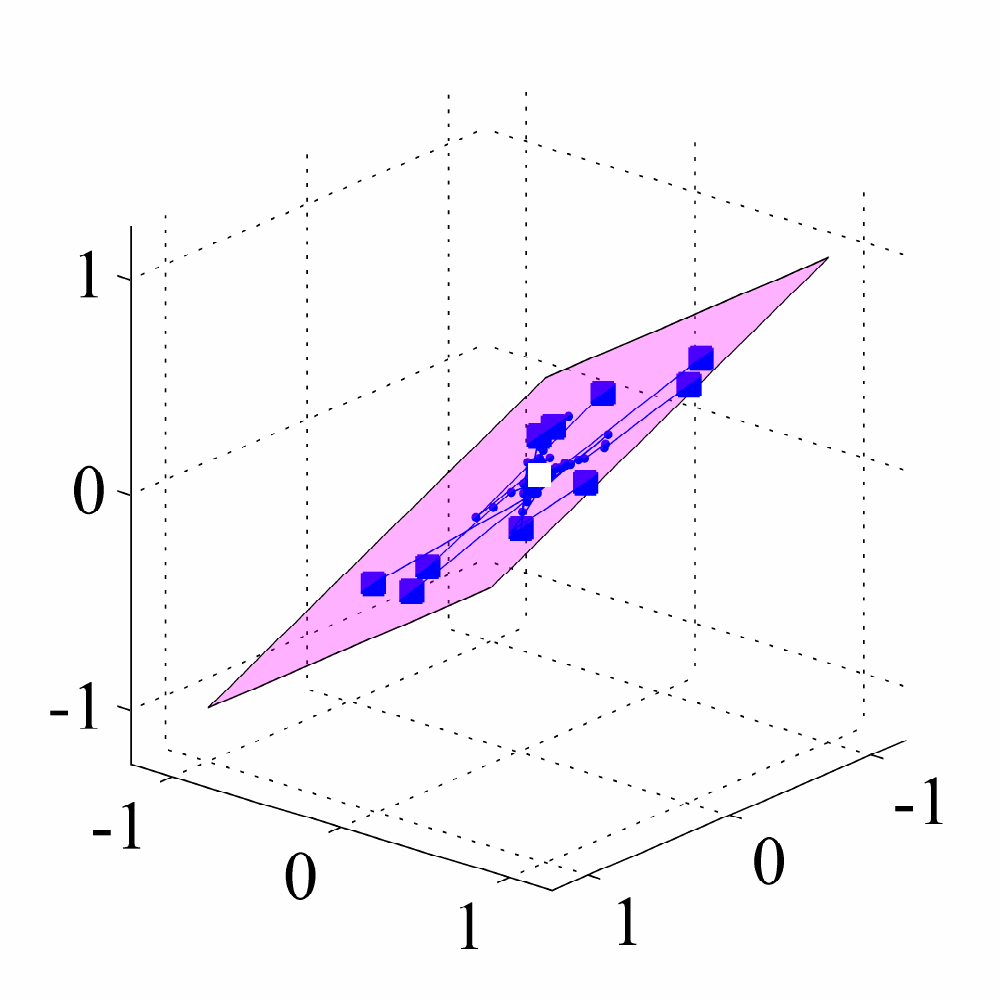}}
\caption{Consensus under uniform outgoing interference: (a) Signal space,~${\mc{S}} \subseteq \mbb{R}^3$, where~$\dim({\mc{S}})=2$; (b) One-dimensional range spaces,~${\oplus}{\Gamma_i}$, of~$\Gamma_i$'s--the null spaces of each are~$\ul{\gamma}=2$-dimensional, shown as planes; (c) Agent transmissions aligned in the corresponding null spaces over time,~$k$; (d) Consensus in the signal subspace,~${\mc{S}}$, after appropriate translations, at each~$i\in\mc{V}$, back to the signal subspace by~$T_j^{-1}$, with~$j\in\mc{N}_i$.}
\label{f11_uoi}
\end{figure*}

\section{Uniform Incoming Interference}\label{s_uii}
In this section, we consider the case of uniform incoming interference, i.e. each agent~$i\in\mc{V}$ incurs the same interference,~$\Gamma_i$, over all of the interferers,~$m\in\mc{V}$. This scenario is shown in Fig.~\ref{uoi_T} (bottom). We note that Theorem~\ref{con_th} is applicable here but results into a conservative approach as elaborated earlier. We note that this case is completely different from the uniform outgoing case (of the previous section), since preconditioning (alone) may not work as we explain below.

When an agent,~$m\in\mc{V}$, employs preconditioning, it may not precondition to account for the interference,~$\Gamma_i$, experienced at each receiver,~$i$, with which~$m$ may interfere. In the purview of Fig.~\ref{uoi_T} (bottom), if agent~$m_2\in\mc{V}$ preconditions using~$T_{m_2}$ to cancel the interference,~$\Gamma_i$, experienced by agent~$i$; the same preconditioning,~$T_{m_2}$, is not helpful to agent~$l$. For example, let agent~$m_2$ choose~$T_{m_2}=V_iU_{{\mc{S}}}^\top$ (a valid choice following Lemma~\ref{Tlem}), then as discussed earlier~$\Gamma_iV_iU_{{\mc{S}}}^\top I_{\mc{S}}=\mb{0}_{n\times n}$ and~$m_2$'s interference is not seen by agent~$i$. However, this preconditioning appears as~$\Gamma_lV_iU_{{\mc{S}}}^\top I_{\mc{S}}$ at agent~$l$, which is~$\mb{0}_{n\times n}$ only when~$V_l^\top V_i=I_n$. This is not true in general.

We now explicitly address the uniform incoming interference scenario. In this case, Eq.~\eqref{cpi2} takes the following form:
\begin{eqnarray}\label{cpi_uii_s}
\mb{x}_{k+1}^i = \sum_{j\in\mc{N}_i} W_{ij}\left(\mt_k^j + \Gamma_{i}\sum_{m\in\mc{V}} a_{ij}^m \mt_k^m\right), 
\end{eqnarray}
$ k\geq0,\mt_0^i\in\mbb{R}^n$ and where, as in Section~\ref{s_uoi}, we use a matrix,~$W_{ij}\in\mbb{R}^{n\times n}$ to retain some design flexibility. The only possible way to cancel the unwanted interference now is via what can be referred to as \emph{post-conditioning}. Each agent,~$i\in\mc{V}$, chooses a post-conditioner,~$R_i\in\mbb{R}^{n\times n}$. As before, we assume~$I_\mc{S} = U_\mc{S}S_\mc{S}V_\mc{S}^\top$ to be the projection matrix for some subspace,~$\mc{S}\subseteq\mbb{R}^n$, and modify the transmission as~$S_\mc{S}\wh{\mt}_k^m$, for some auxiliary state variable,~$\wh{\mt}_k^i\in\mbb{R}^{n}$, to be explicitly defined shortly. The modified protocol is
\begin{eqnarray}\label{cpi_uii}
\wh{\mt}_{k+1}^i = \sum_{j\in\mc{N}_i} W_{ij}R_i\left(S_\mc{S}\wh{\mt}_k^j + \Gamma_{i}\sum_{m\in\mc{V}} a_{ij}^m  S_\mc{S}\wh{\mt}_k^m\right).
\end{eqnarray}
The goal is to design an~$R_i$ such that~$R_i\Gamma_i=\mb{0}_{n\times n}$. Following the earlier approaches, we assume that $\mbox{rank}(\Gamma_i)=\ol{\gamma},\forall i\in\mc{V}$, and $\rank(I_{\mc{S}})=\ul{\gamma}$, such that~$\ol{\gamma} + \ul{\gamma} = n$, with SVDs,~$\Gamma_i=U_iS_iV_i^\top$ and~$I_{\mc{S}}=U_{{{\mc{S}}}}S_{{{\mc{S}}}}V_{{{\mc{S}}}}^\top$, where the singular value matrices are arranged as
\begin{eqnarray}\label{SiSS}
S_i=\left[
\begin{array}{ccc}
S_i^{1:\ol{\gamma}}\\
&\mb{0}_{\ul{\gamma}\times \ul{\gamma}}
\end{array}
\right],\qquad
S_{\mc{S}}=\left[
\begin{array}{ccc}
\mb{0}_{\ol{\gamma}\times \ol{\gamma}}\\
&I_{\ul{\gamma}}
\end{array}
\right].
\end{eqnarray}
The next lemma characterizes the post-conditioner,~$R_i$.
\begin{lem}\label{Rlem}
Let~$\Gamma_i=U_iS_iV_i^\top$ and~$S_\mc{S}$ have the structure of Eq.~\eqref{SiSS}. Given the null-space of~$\Gamma_i^\top$, there exists a rank~$\ul{\gamma}$ post-conditioner,~$R_i$, such that~$R_i\Gamma_i=\mb{0}_{n\times n}$. 
\end{lem}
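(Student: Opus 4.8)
The plan is to mirror Lemma~\ref{Tlem}, but acting on the \emph{left} of $\Gamma_i$ rather than the right, since here the interference must be annihilated \emph{after} it has acted on the received signal. Write the SVD $\Gamma_i=U_iS_iV_i^\top$ with $S_i$ arranged as in Eq.~\eqref{SiSS}, and partition the left singular vectors as $U_i=[\,\ol{U}_i\ \ \ul{U}_i\,]$ with $\ol{U}_i\in\mbb{R}^{n\times\ol{\gamma}}$ and $\ul{U}_i\in\mbb{R}^{n\times\ul{\gamma}}$. The columns of $\ol{U}_i$ span $\oplus\Gamma_i$, while the columns of $\ul{U}_i$ span the null space of $\Gamma_i^\top$ --- which is precisely the information the lemma assumes is known locally at agent~$i$.

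First I would take $R_i=C_i\,\ul{U}_i^\top$ for some $C_i\in\mbb{R}^{n\times\ul{\gamma}}$ of full column rank $\ul{\gamma}$; the choice of $C_i$ is otherwise free at this stage (later, when data recovery is required, it is pinned down in terms of $U_{\mc{S}},V_{\mc{S}}$, but it plays no role in the present claim). Then I would verify the two assertions. For $R_i\Gamma_i=\mb{0}_{n\times n}$: by orthonormality of $U_i$ we have $\ul{U}_i^\top U_i=[\,\mb{0}_{\ul{\gamma}\times\ol{\gamma}}\ \ I_{\ul{\gamma}}\,]$, and right-multiplying this by $S_i$ selects the last $\ul{\gamma}$ rows of $S_i$, all of which vanish by Eq.~\eqref{SiSS}, so
\begin{equation*}
\ul{U}_i^\top\Gamma_i=\ul{U}_i^\top U_iS_iV_i^\top=[\,\mb{0}_{\ul{\gamma}\times\ol{\gamma}}\ \ I_{\ul{\gamma}}\,]\,S_i\,V_i^\top=\mb{0}_{\ul{\gamma}\times n},
\end{equation*}
and hence $R_i\Gamma_i=C_i(\ul{U}_i^\top\Gamma_i)=\mb{0}_{n\times n}$. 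For the rank, $\ul{U}_i^\top$ maps $\mbb{R}^n$ onto $\mbb{R}^{\ul{\gamma}}$, so $\rank(R_i)=\rank(C_i\ul{U}_i^\top)=\rank(C_i)=\ul{\gamma}$. I would also remark that $\ul{\gamma}$ is the largest rank any left-annihilator of $\Gamma_i$ can have: the row space of any $R_i$ with $R_i\Gamma_i=\mb{0}_{n\times n}$ must lie in the left null space of $\Gamma_i$, which is $\ul{\gamma}$-dimensional. Thus the construction is rank-optimal, discarding no more of the signal space than $\Gamma_i$ itself forces.

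The lemma itself is short; the delicate point lies in what follows it, as already signalled in the text preceding the statement. Unlike the uniform outgoing case, a single receiver-side $R_i$ must simultaneously (i)~annihilate $\Gamma_i$ and (ii)~leave the projected data $S_{\mc{S}}\wh{\mt}_k^j$ in a form from which the average over $\mc{S}$ can be reconstructed through the weights $W_{ij}$. Concretely, $R_iS_{\mc{S}}$ must still have rank $\ul{\gamma}$ and $W_{ij}$ must undo its action on $\mc{S}$; choosing $C_i$ (equivalently, the free block in $R_i=C_i\ul{U}_i^\top$) so that both conditions hold at once is where the real work of Section~\ref{s_uii} is, and it is the reason post-conditioning with an \emph{arbitrary} annihilator is not automatically enough even though, by this lemma, an annihilator of the right rank always exists.
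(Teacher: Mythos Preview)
Your proof is correct and follows essentially the same approach as the paper: both construct $R_i$ so that its row space lies in the left null space of $\Gamma_i$ (the span of $\ul{U}_i$), and both verify annihilation via $\ul{U}_i^\top\ol{U}_i=\mb{0}$. The only cosmetic difference is that the paper writes the post-conditioner directly in the block form $R_i=S_{\mc S}\,[\,\ol{U}_i'\mid\ul{U}_i'\,]^\top$ (already tailored for the subsequent simplification in Eq.~\eqref{uii_P1}), whereas you parameterize it as $R_i=C_i\ul{U}_i^\top$ with $C_i$ left free and correctly note that the downstream recovery step is what pins $C_i$ down.
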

\begin{proof}
We assume that~$U_i$ is partitioned as~$\left[\begin{array}{ccc}
\ol{U}_i~|~\ul{U}_i
\end{array}
\right]$,
where~$\ol{U}_i\in\mbb{R}^{n\times\ol{\gamma}}$ and~$\ul{U}_i\in\mbb{R}^{n\times\ul{\gamma}}$. Clearly,~$\ul{U}_i$ is the null-space of~$\Gamma_i^\top$. Define 
\begin{eqnarray}
R_i = S_\mc{S} \left[\begin{array}{ccc}
\ol{U}_i^\prime~|~\ul{U}_i^\prime
\end{array}
\right]^\top,
\end{eqnarray}
where~$\ul{U}_i^\prime$ is such that~$\oplus\ul{U}_i^\prime=\oplus\ul{U}_i$, and~$\ol{U}_i^\prime$ is arbitrary. By definition, we have~$\ul{U}_i^\top\ol{U}_i=\mb{0}_{\ul{\gamma}\times\ol{\gamma}}$; hence, by construction,~${\ul{U}_i^\prime}^\top\ol{U}_i=\mb{0}_{\ul{\gamma}\times\ol{\gamma}}$. It can be verified that the post-conditioning results into
\begin{eqnarray*}
R_i\Gamma_i 
=\left[\begin{array}{ccc}
\mb{0}&\mb{0}\\
I_{\ul{\gamma}}{\ul{U}_i^\prime}^\top\ol{U}_iS_i^{1:\ol{\gamma}}&\mb{0}
\end{array}
\right]V_i^\top,
\end{eqnarray*}
and the lemma follows. Note that~$R_i=S_\mc{S}U_i^\top$ is a valid choice but it is not necessary.  
\end{proof}

With the help of Lemma~\ref{Rlem}, Eq.~\eqref{cpi_uii} is now given by
\begin{eqnarray}\label{cpi_uii3}
\wh{\mt}_{k+1}^i &=& \sum_{j\in\mc{N}_i} W_{ij}S_{\mc{S}} \left[\begin{array}{ccc}
\ol{U}_i^\prime~|~\ul{U}_i^\prime
\end{array}
\right]^\top
S_\mc{S}\wh{\mt}_k^j.
\end{eqnarray}
Recall that~$\ul{U}_i^\prime$ is an~$n\times\ul{\gamma}$ matrix whose column-span is the same as the column-span of~$\ul{U}_i$, and the column-span of~$\ul{U}_i$ is the null-space of~$\Gamma_i^\top$. We now denote the lower~$\ul{\gamma}\times\ul{\gamma}$ sub-matrix of~$\ul{U}_i^\prime$ by~$\wh{U}_i$. In order to simply the above iterations, we note that
\begin{eqnarray}\label{uii_P1}
S_{\mc{S}} \left[\begin{array}{ccc}
\ol{U}_i^\prime~|~\ul{U}_i^\prime
\end{array}
\right]^\top S_{\mc{S}} &=&\left[
\begin{array}{ccc}
\mb{0}_{\ol{\gamma}\times \ol{\gamma}}\\
&\wh{U}_i^\top
\end{array}
\right],
\end{eqnarray}
and~$\dim(\ul{U}_i^\prime)=\dim(\ul{U}_i)=n-\ol{\gamma}=\ul{\gamma}$. It is straightforward to show that~$\wh{U}_i^\top$ is always invertible. Based on this discussion, the following lemma establishes the convergence of Eq.~\eqref{cpi_uii}.
\begin{lem}\label{lem_ui}
Let~$\Gamma_i=U_iS_iV_i^\top, \forall i\in\mc{V},$ and some projection matrix,~$I_{\mc{S}}=U_{{{\mc{S}}}}S_{{{\mc{S}}}}V_{{{\mc{S}}}}^\top$, have ranks~$\ol{\gamma}$, and~$\ul{\gamma}\triangleq n-\ol{\gamma}$, respectively~$(0\leq\ul{\gamma}\leq n)$, such that~$S_i$ and~$S_{\mc{S}}$ are arranged as in Eq.~\eqref{SiSS}. When~$R_i$ is chosen according to Lemma~\ref{Rlem}, and for each~$i\in\mc{V}$,~$W_{ij}$ is chosen as
\begin{eqnarray}\label{uii_W}
W_{ij} = w_{ij}\left[
\begin{array}{ccc}
\mb{0}_{\ol{\gamma}\times \ol{\gamma}}\\
&\left(\wh{U}_i^\top\right)^{-1}
\end{array}
\right],
\end{eqnarray}
the protocol in Eq.~\eqref{cpi_uii} recovers the average of the last~$\ul{\gamma}$ components of the initial conditions,~$\wh{\mt}_0^i$.
\end{lem}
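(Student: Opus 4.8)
The plan is to substitute the two structural facts already in hand---Lemma~\ref{Rlem}, which gives $R_i\Gamma_i=\mb{0}_{n\times n}$, and the block identity Eq.~\eqref{uii_P1}---into the protocol Eq.~\eqref{cpi_uii}, and then recognise what remains as ordinary scalar-weight average-consensus acting only on the last $\ul{\gamma}$ coordinates. \emph{Step 1: annihilate the interference and simplify the mixing matrix.} In Eq.~\eqref{cpi_uii} the factor $R_i$ left-multiplies the whole parenthesis, so every interference term becomes $W_{ij}(R_i\Gamma_i)\sum_{m\in\mc{V}}a_{ij}^m S_\mc{S}\wh{\mt}_k^m=\mb{0}_n$ by Lemma~\ref{Rlem}, and the recursion collapses to $\wh{\mt}_{k+1}^i=\sum_{j\in\mc{N}_i}W_{ij}R_iS_\mc{S}\wh{\mt}_k^j$. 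Since $R_i=S_\mc{S}[\,\ol{U}_i^\prime\mid\ul{U}_i^\prime\,]^\top$, the product $R_iS_\mc{S}$ is exactly the quantity computed in Eq.~\eqref{uii_P1}, namely $\diag(\mb{0}_{\ol{\gamma}\times\ol{\gamma}},\wh{U}_i^\top)$; here one uses the remark following Eq.~\eqref{uii_P1} that $\wh{U}_i^\top$ is invertible, so that the weight $W_{ij}$ in Eq.~\eqref{uii_W} is well-defined.

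\emph{Step 2: reduce to standard consensus.} Substituting Eq.~\eqref{uii_W} and using $S_\mc{S}=\diag(\mb{0}_{\ol{\gamma}\times\ol{\gamma}},I_{\ul{\gamma}})$ gives $W_{ij}R_iS_\mc{S}=w_{ij}\,\diag\!\big(\mb{0}_{\ol{\gamma}\times\ol{\gamma}},(\wh{U}_i^\top)^{-1}\wh{U}_i^\top\big)=w_{ij}S_\mc{S}$, so the recursion becomes $\wh{\mt}_{k+1}^i=\sum_{j\in\mc{N}_i}w_{ij}S_\mc{S}\wh{\mt}_k^j$. A one-line induction then shows that for every $k\ge1$ the top $\ol{\gamma}$ block of $\wh{\mt}_k^i$ vanishes, being a linear combination of the vectors $S_\mc{S}\wh{\mt}_{k-1}^j$ whose top blocks are zero; hence $S_\mc{S}\wh{\mt}_k^j=\wh{\mt}_k^j$ for $k\ge1$, and the iteration degenerates, from time $k=1$ on, to the scalar-weight recursion $\wh{\mt}_{k+1}^i=\sum_{j\in\mc{N}_i}w_{ij}\wh{\mt}_k^j$ acting coordinate-wise on the bottom $\ul{\gamma}$ entries.

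\emph{Step 3: pass to the limit.} Stacking the agents' states into a network vector, this is $(W\otimes I_n)$-averaging with initial value $\wh{\mt}_1^i=\sum_{j\in\mc{N}_i}w_{ij}S_\mc{S}\wh{\mt}_0^j$; under the usual conditions on $\mc{G}$ and $W=\{w_{ij}\}$ recalled in~\cite{boyd:04}, it converges at every agent to $\tfrac1N\sum_{j=1}^N S_\mc{S}\wh{\mt}_0^j$, which is precisely the average of the last $\ul{\gamma}$ components of the initial conditions (the first $\ol{\gamma}$ being zero). The only slightly delicate point is the bookkeeping of the block partition together with the observation that the \emph{first} iterate already performs the $S_\mc{S}$-projection, so that all later iterates are pure consensus; everything else is substitution. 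It is also worth stressing, in contrast to the outgoing case, that here the post-conditioner $R_i$ sits at the \emph{receiver}, which is exactly why a single $R_i$ can annihilate $\Gamma_i$ for all incoming links $j\to i$ simultaneously.
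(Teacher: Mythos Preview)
Your proof is correct and follows essentially the same route as the paper's: kill the interference via $R_i\Gamma_i=\mb{0}_{n\times n}$ to reduce Eq.~\eqref{cpi_uii} to Eq.~\eqref{cpi_uii3}, then use Eq.~\eqref{uii_P1} together with the choice of $W_{ij}$ in Eq.~\eqref{uii_W} to collapse the mixing to $w_{ij}\,\diag(\mb{0}_{\ol{\gamma}\times\ol{\gamma}},I_{\ul{\gamma}})$, and finally invoke standard average-consensus on the bottom $\ul{\gamma}$ coordinates. Your explicit induction showing $S_\mc{S}\wh{\mt}_k^j=\wh{\mt}_k^j$ for $k\ge1$ is a small extra bookkeeping step the paper leaves implicit, but the argument is otherwise identical.
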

\begin{proof}
We note that under the given choice for~$R_i$'s, the interference term is~$\mb{0}_n$, and Eq.~\eqref{cpi_uii} reduces to Eq.~\eqref{cpi_uii3}. Now we use Eqs.~\eqref{uii_P1} and~\eqref{uii_W} in Eq.~\eqref{cpi_uii3} to obtain:
\begin{eqnarray*}
\wh{\mt}_{k+1}^i = \sum_{j\in\mc{N}_i} W_{ij}S_{{{\mc{S}}}}U_i^\top S_{{{\mc{S}}}}\wh{\mt}^j_k
= \sum_{j\in\mc{N}_i}  w_{ij}
\left[
\begin{array}{ccc}
\mb{0}_{\ol{\gamma}\times \ol{\gamma}}\\
&I_{\ul{\gamma}}
\end{array}
\right]
\wh{\mt}^j_k,
\end{eqnarray*}
which in the limit as~$k\ra\infty$ converges to
\begin{eqnarray}
\wh{\mt}_\infty^i &=& \frac{1}{N}\sum_{j=1}^N
\left[
\begin{array}{ccc}
\mb{0}_{\ol{\gamma}\times \ol{\gamma}}\\
&I_{\ul{\gamma}}
\end{array}
\right]\wh{\mt}_0^i,\qquad\forall i\in\mc{V}.
\end{eqnarray}
That~$\wh{U}_i^\top$ is invertible is always true because it is a principal minor of an invertible matrix,~$U_i^\top$.
\end{proof}
\noindent Following is the main result of this section.
\begin{thm}\label{th_cui}
Let~$\Gamma_i$'s,~$R_i$'s, and~$W_{ij}$'s, be chosen according to Lemma~\ref{lem_ui}. The protocol in Eq.~\eqref{cpi_uii} under uniform incoming interference recovers the average in a~$\ul{\gamma}$-dimensional subspace,~$\mc{S}$, of~$\mbb{R}^n$.
\end{thm}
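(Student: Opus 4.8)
The plan is to deduce the theorem almost directly from Lemma~\ref{lem_ui}; what remains is only the bookkeeping that identifies the auxiliary state~$\wh{\mt}_0^i$ with a coordinate representation of the projected initial conditions and then exhibits the final recovery map that carries the limit back into~$\mc{S}$, exactly in the spirit of Theorems~\ref{cui_th} and~\ref{th_uoi}. First I would fix the signal subspace: assume without loss of generality that~$\mc{S}=\oplus A$ for some~$A\in\mbb{R}^{n\times n}$ with~$\dim(\oplus A)=\ul{\gamma}$, and take~$I_{\mc{S}}=A^\dagger A=U_{\mc{S}}S_{\mc{S}}V_{\mc{S}}^\top$ with the singular vectors ordered so that~$S_{\mc{S}}$ has the block form of Eq.~\eqref{SiSS}. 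Since~$I_{\mc{S}}$ is the orthogonal projector onto~$\mc{S}$, this is precisely the structure required by Lemmas~\ref{Rlem} and~\ref{lem_ui}.

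Next I would introduce the auxiliary state and the design quantities. Set~$\wh{\mt}_0^i\triangleq V_{\mc{S}}^\top\mt_0^i$ for each~$i\in\mc{V}$, so that~$U_{\mc{S}}S_{\mc{S}}\wh{\mt}_0^i=I_{\mc{S}}\mt_0^i$; in words, the quantity actually transmitted,~$S_{\mc{S}}\wh{\mt}_0^i$, is the~$I_{\mc{S}}$-projection of the initial condition expressed in the~$U_{\mc{S}}$ basis. Choose the post-conditioner~$R_i$ from the null space of~$\Gamma_i^\top$ as in Lemma~\ref{Rlem} (this uses~$\rank(\Gamma_i)=\ol{\gamma}$ for all~$i\in\mc{V}$), and choose~$W_{ij}$ as in Eq.~\eqref{uii_W}, which is well defined because~$\wh{U}_i^\top$ is a principal minor of the orthogonal matrix~$U_i^\top$ and hence invertible. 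With these choices, Lemma~\ref{lem_ui} says the interference term in Eq.~\eqref{cpi_uii} vanishes and the iteration converges, at every agent, to~$\wh{\mt}_\infty^i=\frac{1}{N}\sum_{j=1}^N S_{\mc{S}}\wh{\mt}_0^j$ (here I am using that the block~$\mathrm{diag}(\mb{0}_{\ol{\gamma}\times\ol{\gamma}},I_{\ul{\gamma}})$ appearing in Lemma~\ref{lem_ui} is exactly~$S_{\mc{S}}$). Applying the globally agreed — hence local — recovery map~$U_{\mc{S}}$ then gives
\begin{eqnarray*}
U_{\mc{S}}\wh{\mt}_\infty^i = \frac{1}{N}\sum_{j=1}^N U_{\mc{S}}S_{\mc{S}}\wh{\mt}_0^j = \frac{1}{N}\sum_{j=1}^N U_{\mc{S}}S_{\mc{S}}V_{\mc{S}}^\top\mt_0^j = \frac{1}{N}\sum_{j=1}^N I_{\mc{S}}\mt_0^j\ \in\ \mc{S},
\end{eqnarray*}
which is the average of the initial conditions projected onto the~$\ul{\gamma}$-dimensional subspace~$\mc{S}$, as claimed.

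I expect there to be no genuine analytic obstacle remaining once Lemma~\ref{lem_ui} is in hand: the substantive work — cancelling each incoming interference via~$R_i\Gamma_i=\mb{0}_{n\times n}$ and decoupling the~$\ul{\gamma}$ surviving consensus coordinates through~$W_{ij}$, together with the standard-consensus convergence of the reduced recursion — has already been carried out there, and the invertibility of~$\wh{U}_i^\top$ is a principal-minor fact. The one point that needs care is the coordinate-frame bookkeeping: one must verify that the composition projection~($V_{\mc{S}}^\top$)~$\to$ transmission~($S_{\mc{S}}$)~$\to$ post-conditioning~($R_i$)~$\to$ weighting~($W_{ij}$)~$\to$ consensus~$\to$ recovery~($U_{\mc{S}}$) telescopes to~$I_{\mc{S}}$, and that every map in the chain — in particular~$U_{\mc{S}},S_{\mc{S}},V_{\mc{S}}$ (known from the agreed-upon~$\mc{S}$) and~$R_i,W_{ij}$ (known at agent~$i$ from the null space of~$\Gamma_i^\top$) — is computable from purely local information, so that the resulting protocol is genuinely distributed, exactly as in Theorem~\ref{cui_th}.
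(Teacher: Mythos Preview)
Your proposal is correct and follows essentially the same approach as the paper: set $\wh{\mt}_0^i=V_{\mc{S}}^\top\mt_0^i$, invoke Lemma~\ref{lem_ui} to obtain $\wh{\mt}_\infty^i$, then recover via $U_{\mc{S}}$ and note that $U_{\mc{S}}\,\mathrm{diag}(\mb{0}_{\ol{\gamma}\times\ol{\gamma}},I_{\ul{\gamma}})\,V_{\mc{S}}^\top=I_{\mc{S}}$. The paper's proof is terser but identical in substance; your additional remarks on locality and the $S_{\mc{S}}$ identification are accurate elaborations rather than a different route.
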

\begin{proof}
Without loss of generality, assume that~$\mc{S}$ has a projection matrix,~$I_{\mc{S}}$, with SVD as defined above. Let~$\wh{\mt}_0^i=V_{\mc{S}}^\top\mt_0^i$ and define~$\wt{\mt}_k^i=U_{\mc{S}}\wh{\mt}_k^i, \forall i\in\mc{V}$. Then, from Lemma~\ref{lem_ui}
\begin{eqnarray*}
\wt{\mt}_\infty^i = U_{\mc{S}}\frac{1}{N}\sum_{j=1}^N
\left[
\begin{array}{ccc}
\mb{0}_{\ol{\gamma}\times \ol{\gamma}}\\
&I_{\ul{\gamma}}
\end{array}
\right]V_{\mc{S}}^\top\mt_0^i=\frac{1}{N}\sum_{j=1}^N
U_{\mc{S}}S_{\mc{S}}V_{\mc{S}}^\top\mt_0^i,
\end{eqnarray*}
$\forall i\in\mc{V}$, and the theorem follows.
\end{proof}
Some remarks are in order to explain the mechanics of Theorem~\ref{th_cui}. Let $I_\mc{S}=U_\mc{S}S_\mc{S}V_\mc{S}^\top$ with $ V_\mc{S} =
\left[
\begin{array}{ccc}
\ol{V}_\mc{S}&|&\ul{V}_\mc{S}
\end{array}\right],$ and
$ U_\mc{S} =
\left[
\begin{array}{ccc}
\ol{U}_\mc{S}&|&\ul{U}_\mc{S}
\end{array}\right],$
where~$\ol{V}_\mc{S}$ is the null space of~$I_\mc{S}$. 

(i) When any agent~$i\in\mc{V}$ receives~$S_\mc{S}\wh{\mt}_0^m$ as an interference, it is canceled via the post-conditioning by~$R_i$, regardless of the transmission,~$S_\mc{S}\wh{\mt}_0^m$:
\begin{eqnarray*}
R_i~\Gamma_i~S_\mc{S}\wh{\mt}_0^m =  
S_{\mc{S}}S_i~~V_i^\top~~S_\mc{S}\wh{\mt}_0^m= \mb{0}_n,
\end{eqnarray*}
because of the structure in the~$S_\mc{S}$ and~$S_i$ from Eq.~\eqref{SiSS}. 

{(ii) It is more interesting to observe the effect on the intended transmission,~$j\ra i$, after the post-conditioning and multiplication with~$W_{ij}$. It is helpful to note that~$S_\mc{S}=S_\mc{S}^\dagger$, and consider the transmission as~$S_\mc{S}^\dagger\wh{\mt}_0^j$ instead of~$S_\mc{S}\wh{\mt}_0^j$:
\begin{eqnarray*}
W_{ij}~R_i~~S_\mc{S}^\dagger\wh{\mt}_0^j &=& W_{ij}~~\underbrace{S_{\mc{S}}U_i^\top}_{\scriptsize\mbox{Rx}}~~\underbrace{S_\mc{S}^\dagger\wh{\mt}_0^j}_{\scriptsize\mbox{Tx}}.
\end{eqnarray*}
The operation,~$S_\mc{S}U_i^\top$, by the receiver, Rx, is vital to cancel the interference as shown in the previous step. However, this measure by the receiver also `distorts' the intended transmission. What agent~$i$ receives is now multiplied by a low-rank matrix,~$S_\mc{S}^\dagger$, in general. Consider for a moment that agent~$j$ were to send~$\wh{\mt}_0^j$ and agent~$i$ obtains~$S_\mc{S}U_i^\top\wh{\mt}_0^j$, after the interference canceling operation. How can agent~$i$ choose an appropriate~$W_{ij}$ to undo this post-conditioning? Such a procedure is not possible unless in trivial scenarios, e.g., when the interference was a diagonal matrix and~$U_i=I_n$. \emph{However, the transmitter may preemptively undo the distortion eventually incurred by the receiver's interference canceling operation}. This is precisely what is achieved by sending~$S_\mc{S}^\dagger\wt{\mt}_0^j$. 

(iii) As we discussed, a preemptive measure, sending~$S_\mc{S}^\dagger\wt{\mt}_0^j$, by the transmitter is vital so that the distortion bound to be added at the receiver is reversed. This reorientation, however, can be harmful, e.g.,~$\wh{\mt}_0^j$ may only contain meaningful (non-zero) information in the first~$\ol{\gamma}$ components and the multiplication by~$S_\mc{S}$ destroys this information. To avoid this issue, we choose the initial condition at each agent as~$\wh{\mt}_0^i=V_{\mc{S}}^\top\mt_0^i$; the first transmission at any agent~$i$ is thus:
	\begin{eqnarray*}
    S_\mc{S}\wh{\mt}_0^i &=&S_\mc{S}V_\mc{S}^\top\mt_0^i = 
    \left[
    \begin{array}{cc}
    \mb{0}_{\ol{\gamma}}\\ 
    \ul{V}_\mc{S}^\top\mt_0^i
    \end{array}
    \right],
    \end{eqnarray*}
    which is to transform any arbitrary initial condition \emph{orthogonal to the null-space} of the desired signal subspace,~$\mc{S}$. Since, the signal subspace,~$\mc{S}$, is~$\ul{\gamma}$-dimensional, retaining only the last~$\ul{\gamma}$ components, after the transformation by~$V_\mc{S}^\top$, suffices.

(iv) We choose~$W_{ij}$ according to Eq.~\eqref{uii_W} and obtain
\begin{eqnarray*}
\wh{\mt}_1^i = \sum_{j\in\mc{N}_i}W_{ij}R_iS_\mc{S}\wh{\mt}_0^j
= 
S_\mc{S}V_\mc{S}^\top\sum_{j\in\mc{N}_i}w_{ij}\mt_0^j = S_\mc{S}V_\mc{S}^\top \mb{x}_{1}^i, 
\end{eqnarray*}
$\forall i\in\mc{V}$, where~$\mb{x}_{k}^i$ are the interference-free consensus iterates. Now lets look at~$\wh{\mt}_2^i$, ignoring the interference terms as they are~$\mb{0}_n$, regardless of the transmission:
\begin{eqnarray*}
\wh{\mt}_{2}^i = \sum_{j\in\mc{N}_i} W_{ij}R_iS_\mc{S}~~S_\mc{S}V_\mc{S}^\top \mb{x}_1^j
=S_\mc{S}V_\mc{S}^\top \mb{x}_2^i,
\end{eqnarray*}
by the same procedure that we followed to obtain~$\wh{\mt}_1^i$. In fact, the process continues and we get $\wh{\mt}_{k+1}^i = S_\mc{S}V_\mc{S}^\top \mb{x}_{k+1}^i,$ or $\wh{\mt}_\infty^i = S_\mc{S}V_\mc{S}^\top \mb{x}_\infty^i$, and the average in~$\mc{S}$, is obtained by $\wt{\mt}_\infty^i = U_\mc{S}\wh{\mt}_\infty^i = U_\mc{S}S_\mc{S}V_\mc{S}^\top \mb{x}_\infty^i = I_\mc{S}\mt_\infty^i.$

\subsection{Illustration of Theorem~\ref{th_cui}}
We now provide a graphical illustration of Theorem~\ref{th_cui}. The network is comprised of~$N=10$ agents each with a randomly chosen initial condition on a~$2$-dimensional subspace,~$\mc{S}$, of~$\mbb{R}^3$, shown in Fig.~\ref{f11_uii} (a). Incoming interference is chosen randomly as a one-dimensional subspace at each agent, shown as grey lines in Fig.~\ref{f11_uii} (b). It can be easily verified that the span of all of the interferences,~$\oplus_{i\in\mc{V}}\Gamma_i$, is the entire~$\mbb{R}^3$. The initial conditions are now transformed with~$V_\mc{S}^\top$ so that the transmission,~$S_\mc{S}\wh{\mt}_k^i$, does not destroy the signal subspace,~$\mc{S}$. This transformation is shown in Fig.~\ref{f11_uii} (c). Consensus iterations are implemented in this transformed subspace,~$\wh{\mt}_k^i$, Fig.~\ref{f11_uii} (d), and finally, the iterations,~$\wt{\mt}_k^i$, in the signal subspace,~$\mc{S}$, are obtained via a post-multiplication by~$U_\mc{S}$.   
\begin{figure*}
\centering
\subfigure{\includegraphics[width=1.25in]{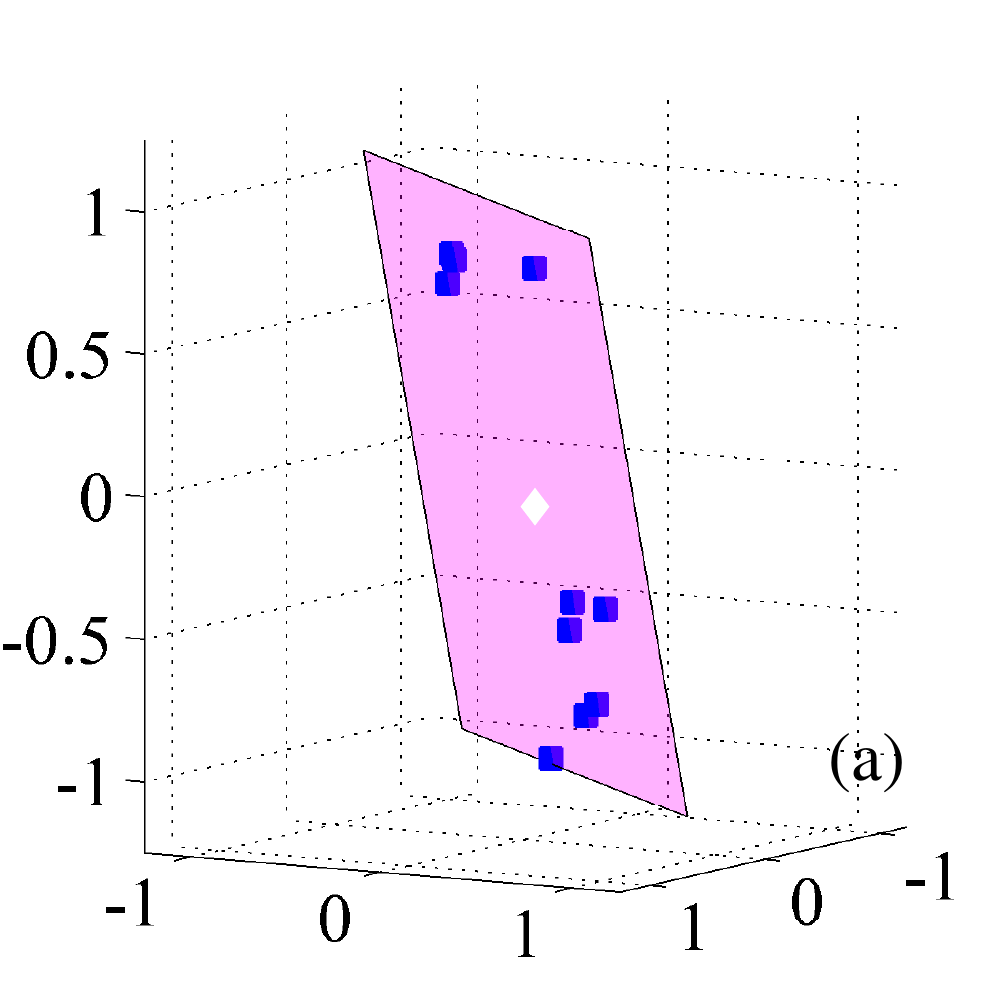}}
\subfigure{\includegraphics[width=1.25in]{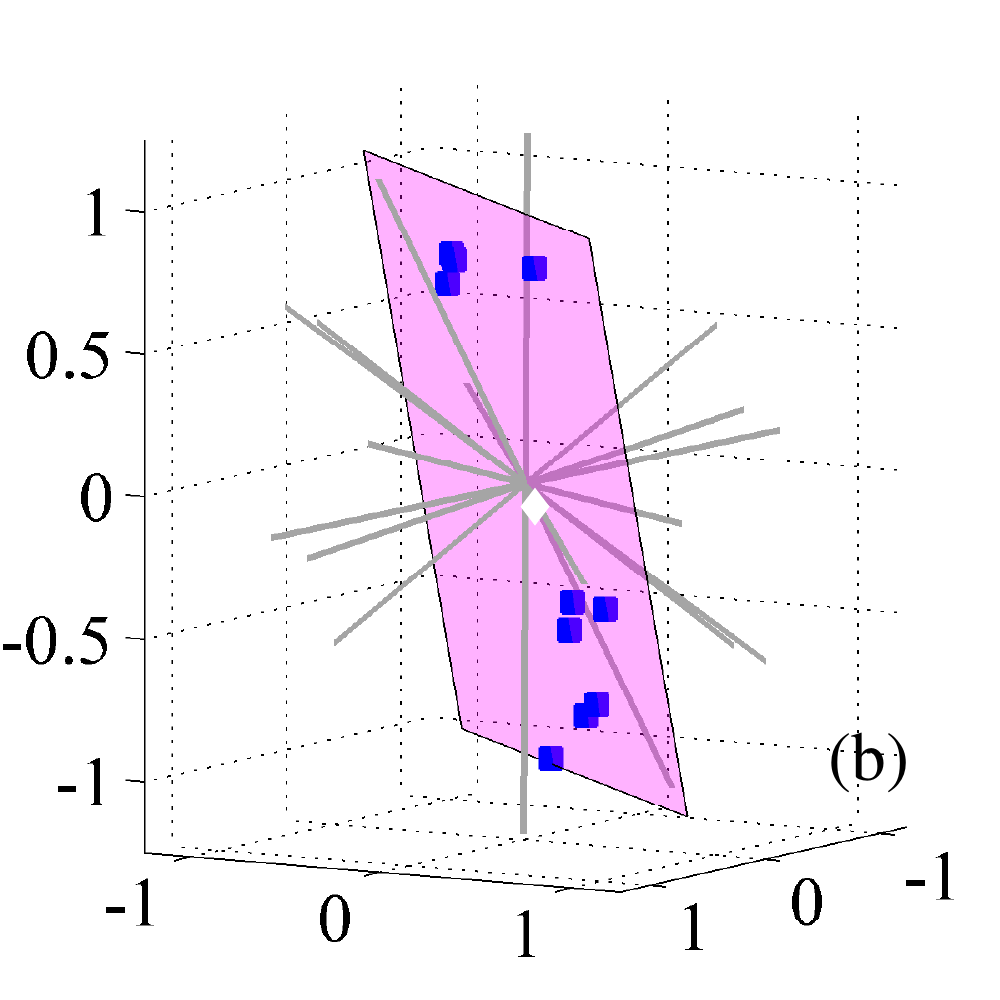}}
\subfigure{\includegraphics[width=1.25in]{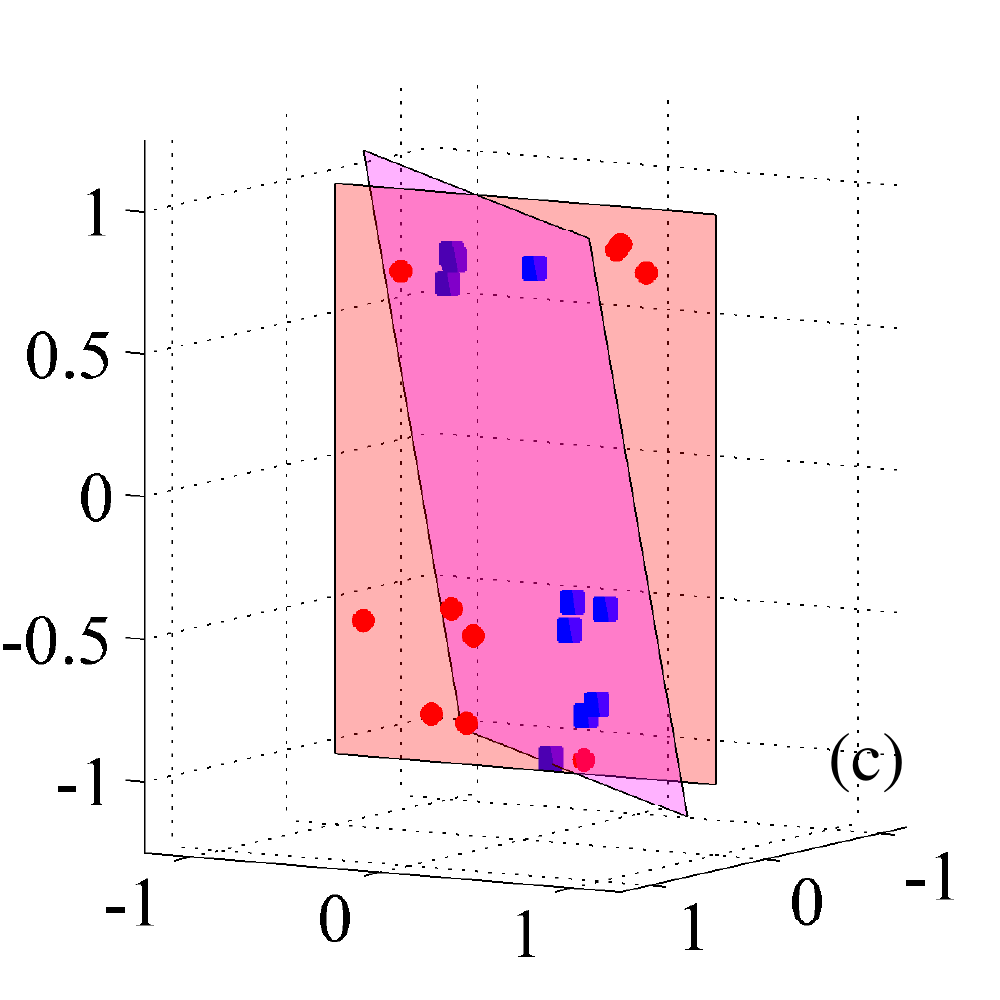}}
\subfigure{\includegraphics[width=1.25in]{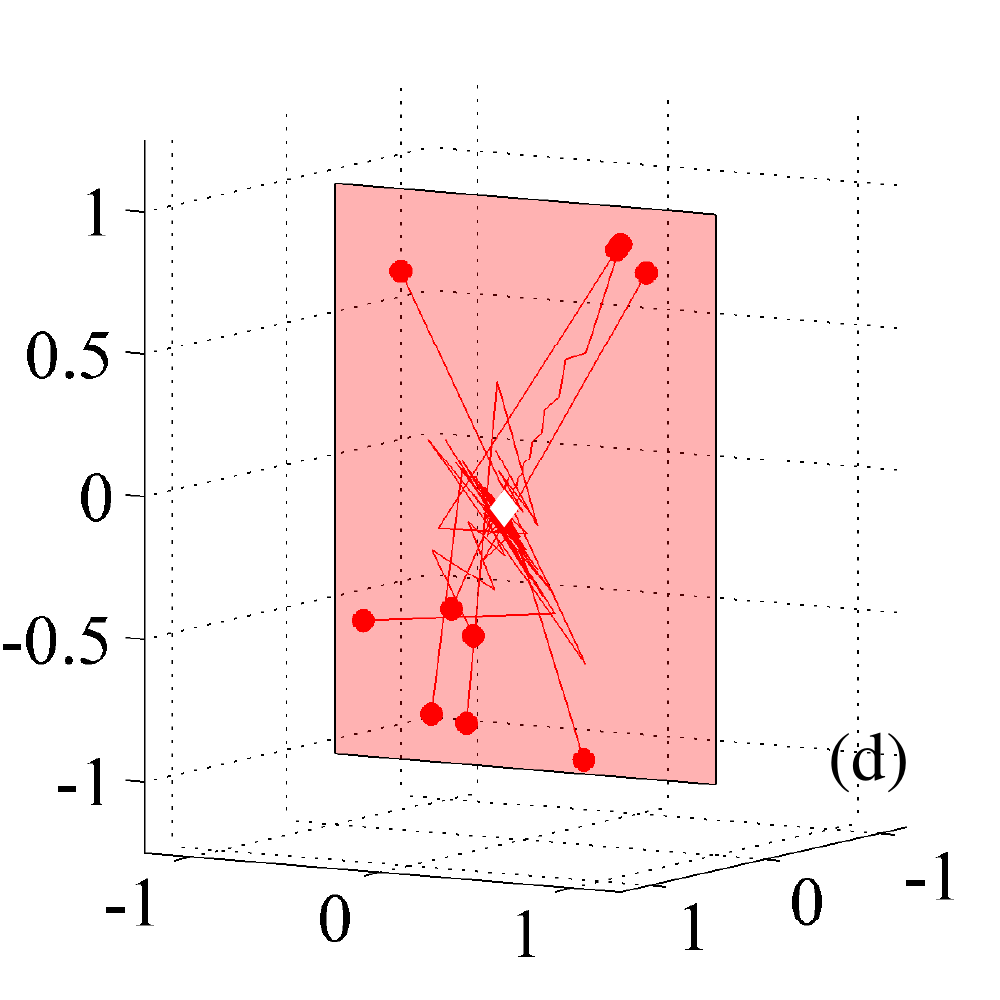}}
\subfigure{\includegraphics[width=1.25in]{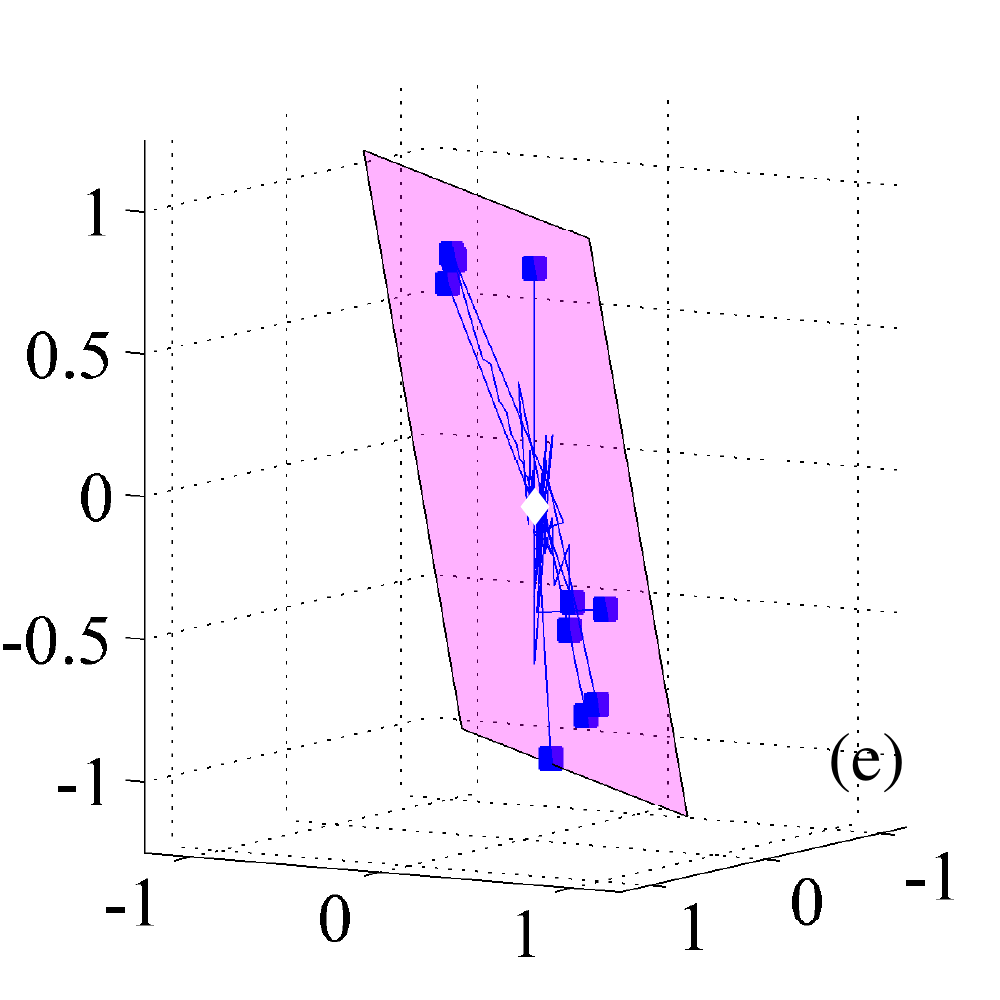}}
\caption{Uniform Incoming Interference: (a) Signal subspace,~$\mc{S}\subseteq\mbb{R}^3$, with~$\dim(\mc{S})=2$. The initial conditions are shown as blue squares and the true average is shown as a white diamond; (b) One-dimensional interference null-spaces at each agent,~$i\in\mc{V}$; (c) Auxiliary state variables,~$\wt{\mt}_0^j=V_\mc{S}^\top\mt_0^i$, shown as red circles; (d) Consensus iterates in the auxiliary states and the average in the auxiliary initial conditions; and, (e) Recovery via~$\wt{\mt}_k^i=U_\mc{S}\wh{\mt}_k^i$.} 
\label{f11_uii}
\end{figure*}

\section{Discussion}
\label{s_discuss}
We now recapitulate the development in this paper. 

{\bf Assumptions:} The exposition is based on three assumptions, (a) and (b) in Section~\ref{pf}, and (c) in Section~\ref{ui_gen}. Assumption (a), in general, ensures that the setup remains practically relevant, and further makes the averaging problem non-trivial. Assumption (b) is primarily for the sake of simplicity; the strategies described in this paper are applicable to the time-varying case. What is required is that when any incoming (or outgoing) interference subspace changes with time, this change is known to the interferer (or the receiver) so that appropriate pre- (or post-) conditioning is implemented. Finally, Assumption (c) is noted to cast a concrete structure on the proposed interference modeling. In fact, one can easily frame the incoming or outgoing interference as a special case of the general framework. However, explicitly noting it establishes a clear distinction among the different structures.

{\bf Conservative Paradigm:} We consider a special case when each of the interference block in the network, see Fig.~\ref{fig_gl}, is identical. This approach, rather restrictive, sheds light on the information alignment notion that keeps recurring throughout the development, i.e. hide the information in the null space of the interference. When the local interferences,~$\Gamma_{ij}^m$, are not identical, we provide a conservative solution that utilizes an interference `blanket' (that covers each local interference subspace) to implement the information alignment. However, as we discussed, this interference blanket soon loses relevance as it may be~$n$-dimensional to provide an appropriate cover. When this is true, the only reliable data hiding is via a zero-dimensional hole (origin) and no meaningful information is transmitted. This conservative approach is improved in the cases of uniform outgoing and incoming interference models. 

{\bf Uniform Outgoing Interference:} The fundamental concept in the uniform outgoing setting is to hide the desired signal in the null-space of the interferences,~$\Gamma_m$'s. This alignment is possible at each transmitter as the eventual interference is only a function of the transmitter. 

{\bf Uniform Incoming Interference:} The basic idea here is to hide the desired signal in the null-space of the transpose of incoming interferences,~$\Gamma_i^\top$'s. This alignment is possible at each receiver as the eventual interference is only a function of the receiver. It can be easily verified that the resulting procedure is non-trivial. 

{\bf Null-spaces}: Incoming and outgoing interference comprise the two major results in this paper. It is noteworthy that both of these results only assume the knowledge of the corresponding interference null-spaces; the basis vectors of these null spaces can be arbitrary while the knowledge of the interference singular values is also not required. It is noteworthy that in a time-varying scenario where the basis vectors of the corresponding null-spaces change such that their span remains the same, no time adjustment is required. 

{\bf Uniform Link Interference}: One may also consider the case when $\Gamma_{ij}^m=\Gamma_{ij}$, see Eq.~\eqref{cpi2}, i.e., each interference gain is only a function of the communication link, $j\ra i$. Subsequently, when each receiving agent, $i\in\mc{V}$, knows the null space of $\Gamma_{ij}^\top$, a protocol similar to the uniform incoming interference can be developed.

{\bf Performance}: To characterize the steady-state error, denoted by~$\mb{e}_\infty^i$ at an agent~$i$, define~$\mb{e}_\infty^i = \mt_\infty^i - I_\mc{S}\mt_\infty^i$, where~$\mt_\infty^i$ is the true average, Eq.~\eqref{pavg}. Clearly,
\begin{eqnarray}\nonumber
\left(I_\mc{S}\mt_\infty^i\right)^\top \mb{e}_\infty^i = (\mt_\infty^i)^\top I_\mc{S}^\top\left(I_n - I_\mc{S}\right)\mt_\infty^i = 0,\qquad \forall i\in\mc{V},
\end{eqnarray}
i.e. the error is orthogonal to the estimate, or the average obtained is the best estimate in~$\mc{S}\subseteq\mbb{R}^n$ of the perfect average. 

\section{Conclusions}
\label{s_conclude}
In this paper, we consider three particular cases of a general interference structure over a network performing distributed (vector) average-consensus. First, we consider the case of uniform interference when the interference subspace is uniform across all agents. Second, we consider the case when this interference subspace depends only on the interferer (transmitter), referred to as \emph{uniform outgoing interference}. Third, we consider the case when the interference subspace depends only on the receiver, referred to as \emph{uniform incoming interference}. For all of these cases, we show that when the nodes are aware of the complementary subspaces (null spaces) of the corresponding interference, consensus is possible in a low-dimensional subspace whose dimension is complimentary to the largest interference subspace (across all of the agents). For all of these cases, we derive a completely local \emph{information alignment} strategy, followed by local consensus iterations to ensure perfect subspace consensus. We further provide the conditions under which this subspace consensus recovers the exact average. The analytical results are illustrated graphically to describe the setup and the information alignment scheme.

\bibliographystyle{IEEEbib}
\bibliography{interference_refs,refs}
\end{document}